\newcommand\edit[1]{{\color{black} #1}}
\acrodef{cs}[CS]{compressed sensing}	
\acrodef{ista}[ISTA]{iterative shrinkage thresholding algorithm}	
\acrodef{fista}[FISTA]{fast \ac{ista}}
\acrodef{dnn}[DNN]{deep neural network}
\acrodef{rnn}[RNN]{recurrent neural network}
\acrodef{lista}[LISTA]{learned \ac{ista}}	
\acrodef{alista}[ALISTA]{Analytic-LISTA}
\acrodef{adalista}[AdaLISTA]{Adaptive-LISTA}
\acrodef{amp}[AMP]{approximate message passing}
\acrodef{admm}[ADMM]{alternating direction method of multipliers}
\acrodef{snr}[SNR]{signal-to-noise ratio}
\acrodef{nmse}[NMSE]{normalized mean squared error}
\acrodef{2d}[2D]{two-dimensional}
\acrodef{doa}[DOA]{direction-of-arrival}
\acrodef{dft}[DFT]{discrete Fourier transform}
\newtheorem{lemma}{Lemma}
\newtheorem{theorem}{Theorem}
\newtheorem{proposition}{Proposition}
\newtheorem{definition}{Definition}
\newcommand{\Supp}{{\textnormal{Supp}}}
\newcommand{\abs}[1]{\left\lvert#1\right\rvert}
\renewcommand{\paragraph}[1]{\noindent\textbf{#1}\quad}
\newcommand{\norm}[1]{\| #1 \|}
\title{Theoretical Linear Convergence of Deep Unfolding Network for Block-Sparse Signal Recovery}
\name{Rong Fu$^{\star}$ 
	\qquad Vincent Monardo$^{\dagger}$
	\qquad Tianyao Huang$^{\star}$ 
	\qquad Yimin Liu$^{\star}$
	\thanks{
		Part of this paper was presented in part at the IEEE International Conference on Acoustics, Speech and Signal Processing (ICASSP), Toronto, Ontario, Canada, June 2021 \cite{2021Deep}.
	}
}
\address{$^{\star}$ Tsinghua University, Beijing, China \\
	$^{\dagger}$ Carnegie Mellon University, Pittsburgh, PA, USA
}
\begin{document}
	%
\maketitle
\begin{abstract}

In this paper, we consider the recovery of the high-dimensional block-sparse signal from a compressed set of measurements, where the non-zero coefficients of the recovered signal occur in a small number of blocks.
Adopting the idea of deep unfolding, we explore the block-sparse structure and put forward a block-sparse reconstruction network named Ada-BlockLISTA, which performs gradient descent on every single block followed by a block-wise shrinkage.
Furthermore, we prove the linear convergence rate of our proposed network, which also theoretically guarantees exact recovery for a potentially higher sparsity level based on   underlyingblock structure.
Numerical results indicate that Ada-BlockLISTA yields better signal recovery performance compared with existing algorithms, which ignore the additional block structure in the signal model.
\end{abstract}
	\begin{keywords}
		Compressed sensing, block-sparse, deep unfolding, learned ISTA, Adaptive-LISTA, harmonic retrieval problem, linear convergence
	\end{keywords}
	\section{Introduction}
	\label{sec:intro}

	As it is crucial to minimize sample size required in estimation problems, compressive sensing (CS) plays an indispensable role in the field of signal processing \cite{4313110,review,Eldar2012}. 
	
	There are many existing problems which can be formulated in terms of \ac{cs}, including sparse channel estimation \cite{5621984}, beam pattern synthesis \cite{7952786}, \ac{doa} estimation \cite{Balakrishnan2004A,Compbeamf} and range-Doppler estimation \cite{HuangCSFAR}.
	The recovery is completed by finding a sparse representation of the time-domain signal over a dictionary matrix $\bm \Phi \in \mathbb{C}^{N \times M}$, which is always a row-sampled \ac{dft} matrix.

	Since the ill-posedness of this kind of problem, a myriad of methods a myriad of 
	One of the well-known $\ell_1$-norm regularization techniques is \ac{ista} \cite{Beck2009A}, which is often-used 
	in a wide range of applications \cite{Combettes2006Signal, antonello2018proximal}.  
	Since ISTA is solving a convex optimization problem, it is guaranteed to converge to a solution under the correct circumstances. 
	As an iterative solver, it often costs too much time for \ac{ista} to converge, thus difficult to apply in various real-time applications. Some pre-defined optimization parameters such as the step size and regularization parameter are also required to set carefully based on prior knowledge, which may become quite a challenge in some cases.\cite{ZhangG17b}.  

	Recently, there has been an outburst of studies solving this problem via \acp{dnn}. \acp{dnn} are able to approximate nonlinear functions between inputs and outputs, which motivates researchers to consider the possibilities of finding the best possible solution within a limited calculation time.
	Since the remarkable success of \acp{dnn} in a variety of applications, it motivates us to use \acp{dnn} in sparse linear inverse problems, which significantly improve in both accuracy and complexity over traditional algorithms.
	From this perspective, Gregor and LeCun \cite{Gregor2010Learning} have proposed a \ac{rnn} to solve sparse coding problems, named \ac{lista}.
	Based on the iterative structure of \ac{ista}, the authors free the traditional parameters in \ac{ista} to data-driven variables 
	and unfold \ac{ista} algorithms into a \ac{rnn} structure. 
	\ac{lista} networks show improved performance in terms of convergence speed in both theoretical analysis and empirical results \cite{AMP-Inspired,OnsagerLAMP,Fu2019}.

	One drawback of \ac{lista} is that it is a hard-coded network which can only be used for a fixed dictionary.
	Many works have made some modifications in \ac{lista} to increase its adaptability by embedding the dictionary in the network architecture.
	\cite{NEURIPS2018_cf8c9be2} couples the learned matrices in \ac{lista} by embedding the dictionary matrix into the network structure.
	Base on the work of \cite{NEURIPS2018_cf8c9be2}, robust-ALISTA is proposed in \cite{liu2018alista} which explicitly calculates the learned matrices by solving a coherence minimization problem.
	As robust-ALISTA only needs to learn a few network parameters, like step size and threshold, it gains some robustness against stochastic model permutations when using an end-to-end robust training strategy. 
	Furthermore, there is another adaptive variation of \ac{lista}, named \ac{adalista} \cite{2020Ada}, which generalizes the application of robust-ALISTA to various model scenarios.
	Beyond enjoying \edit{accelerated} convergence speed, \ac{adalista} 
	is able to serve different dictionaries using the same weight matrix without retraining the whole network.
	
	However, all these LISTA-type networks are sensitive to the signal sparsity level.
	While increasing the number of non-zero elements in the sparse signal of interest, LISTA and its variants as well as other sparsity-exploiting methods will suffer decreased signal recovery accuracy.
	In practical applications, the recovered signal may not be sufficiently sparse, such as block-sparse problems which recover an unknown, block-sparse signal, where the nonzero elements are distributed in blocks.
	Block-sparsity naturally arises in many applications such as range-Doppler reconstruction of extended targets \cite{ExtendedTarget}, estimation and equalization of sparse communication channels \cite{990897}, sensing resources in compressed DNA micro-arrays \cite{4550564}, color imaging \cite{Majumdar2010Compressed}, multiple measurement vectors, etc.
	
		Based on \ac{adalista}, a block-sparse reconstruction network, named Ada-BlockLISTA, has been proposed in \cite{2021Deep}.
		By performing gradient descent and soft threshold individually for each separate block, Ada-BlockLISTA makes good use of the natural structure of a block-sparse signal and is able to recovery its non-zeros blocks correctly, while original \ac{adalista} fails especially with a large number of non-zeros blocks. 
	In this paper, we further derive rigorous theoretical analysis for this structured network. 
	extend
	In this paper, we aim to extend the applicability of LISTA-type networks to block-sparse recovery problems. 
	We first formulate the signal model with block structure, and introduce relevant iterative algorithms solving the block-sparse recovery problem in Section \ref{sec:basicsolution}. 
	By exploring the block structure of the signal model, we review prior work of unfolded networks and propose our Ada-BlockLISTA network in Section \ref{sec:Proposed}, which is an extension of \ac{adalista}.
	%
	In Section \ref{sec:Convergence}, we further analyze the convergency of the proposed network and also demonstrate its convergence acceleration over traditional iterative methods by extensive simulations.
    In numerical simulations, we apply our Ada-BlockLISTA to \ac{2d} harmonic retrieval in Section \ref{sec:Numerical}, which shows great advantage of recovery performance versus its non-learned counterparts.
	
	
	
	\textit{Notation}: 
	The symbol $\mathbb{C}$ denotes the set of complex numbers. Correspondingly, $\mathbb{C} ^{M}$ and $\mathbb{C} ^{M  \times N}$ are the sets of the $M$-dimensional ($M$-D) vectors and $M \times N$ matrices of complex numbers, respectively.
	The subscripts $[\cdot]_{i}$ and $[\cdot]_{i,k}$ denote the $i$-th entry of a vector and the $i$-th row, $k$-th column entry of a matrix. 
	We let $[\cdot]$ and $\{\cdot\}$ denote a vector/matrix and a set, respectively. 
	We use a set in subscript to construct a vector/matrix or set, e.g., for a set $\mathcal{N}:=\{1,2,\dots,N-1\}$ and vectors $\bm x_n \in \mathbb{C}^{M}$, $n \in \mathcal{N}$, $[\bm x_n]_{n \in \mathcal{N}}$ and $\{\bm x_n\}_{n \in \mathcal{N}}$  representing the matrix $[\bm x_1, \bm x_2,\dots,\bm x_{N}] \in \mathbb{C}^{M \times N}$ and the set $\{\bm x_1, \bm x_2,\dots,\bm x_{N}\}$, respectively.
	The transpose and Hermitian transpose are denoted by the superscripts $(\cdot)^T$ and $(\cdot)^H$, respectively.
	For a vector, $\| \cdot \|_0$  and $\| \cdot \|_q$ denote the $\ell_0$ ``norm'' and $\ell_q$ norm, $q \ge 1$, respectively. We define $\Supp(\cdot)$ as the support of a vector.
	
	$
	$

	\section{System Model and Traditional Iterative Algorithms}
	\label{sec:basicsolution}
	
	In this section, we introduce the signal model with block-sparsity.
	Recall that traditional iterative algorithms such as \ac{ista} leverage the assumption of sparsity in the ground truth signal.
	We also extend \ac{ista} to Block-ISTA, which recovers the block-sparse signals through $\ell_{2,1}$ norm minimization.
	
	\subsection{System Model}
	\label{ssec:blocksystem}
	
	Our goal is to recover an unknown, sparse signal from noisy observations taken from a known, under-determined dictionary $\bm \Phi \in \mathbb{C}^{N \times M}$, with $ N < M$. 
	In this ill-conditioned problem, noisy observation $ \bm y \in {\mathbb{C}^N}$ can be formulated as
	
	\begin{equation}  \label{eq:system}
	\bm y =\bm \Phi \bm x^* + \bm \varepsilon,
	\end{equation}
	where $ \bm x^* \in {\mathbb{C}^M}$ is the ground truth, and $ \bm \varepsilon \in {\mathbb{C}^N}$ is additive random noise present in the system. 
	Next, we define block-sparsity according to \cite{Block-Yonina}.
	
	Suppose that a block-sparse signal $\bm x$ is divided into $Q$ blocks of length $P$, as shown in \eqref{eq:blockx}. 
	Inside each block, there are are $P$ elements, denoted $ x_{p,q} \in \mathbb{C}$, to construct sub-vector ${\bm x}_q \in \mathbb{C}^{P}$, i.e., ${\bm x}_q :=\left[ x_{p,q} \right]_{ 1 \le p \le P}^T$.
	\begin{eqnarray}\label{eq:blockx}
	\bm x = {[
		\underbrace { x_{1,1} \cdots x_{P,1} }_{{\bm x}_1^T}\;
		\underbrace { x_{1,2} \cdots x_{P,2} }_{{\bm x}_2^T}\;\cdots\;
		\underbrace { x_{1,Q} \cdots x_{P,Q} }_{{\bm x}_Q^T}]^T}.
	\end{eqnarray}
	If there are at most $K$ nonzero blocks ($K \ll Q$) in $\bm x$, it is called $K$-block-sparse.
	Sharing the same nested structure with $\bm x$, the dictionary matrix $\bm \Phi$ is also divided into $Q$ blocks, i.e.,
	\begin{eqnarray}\label{eq:blockdict}
	\bm \Phi = [
	\underbrace {\bm \phi _{1,1} \cdots \bm \phi _{P,1}}_{{\bf{\Phi }}_1}\;
	\underbrace {\bm \phi _{1,2} \cdots \bm \phi _{P,2}}_{{\bf{\Phi }}_2}\;\cdots\;
	\underbrace {\bm \phi _{1,Q} \cdots \bm \phi _{P,Q}}_{{\bf{\Phi }}_Q}],
	\end{eqnarray}
	where each sub-matrix ${{\bf{\Phi }}_q} \in {\mathbb{C}^{N \times P}}$ has $P$ columns and $\bm \phi _{p,q}$ denotes the $p$-th column of ${{\bf{\Phi }}_q}$.
	A visualization of the block-sparse model is shown in Fig.~\ref{fig:model}.
	\begin{figure} [tb]
		\centering
		\subfigure[]{ 
			\label{fig:model}
			\includegraphics[width=0.35\columnwidth]{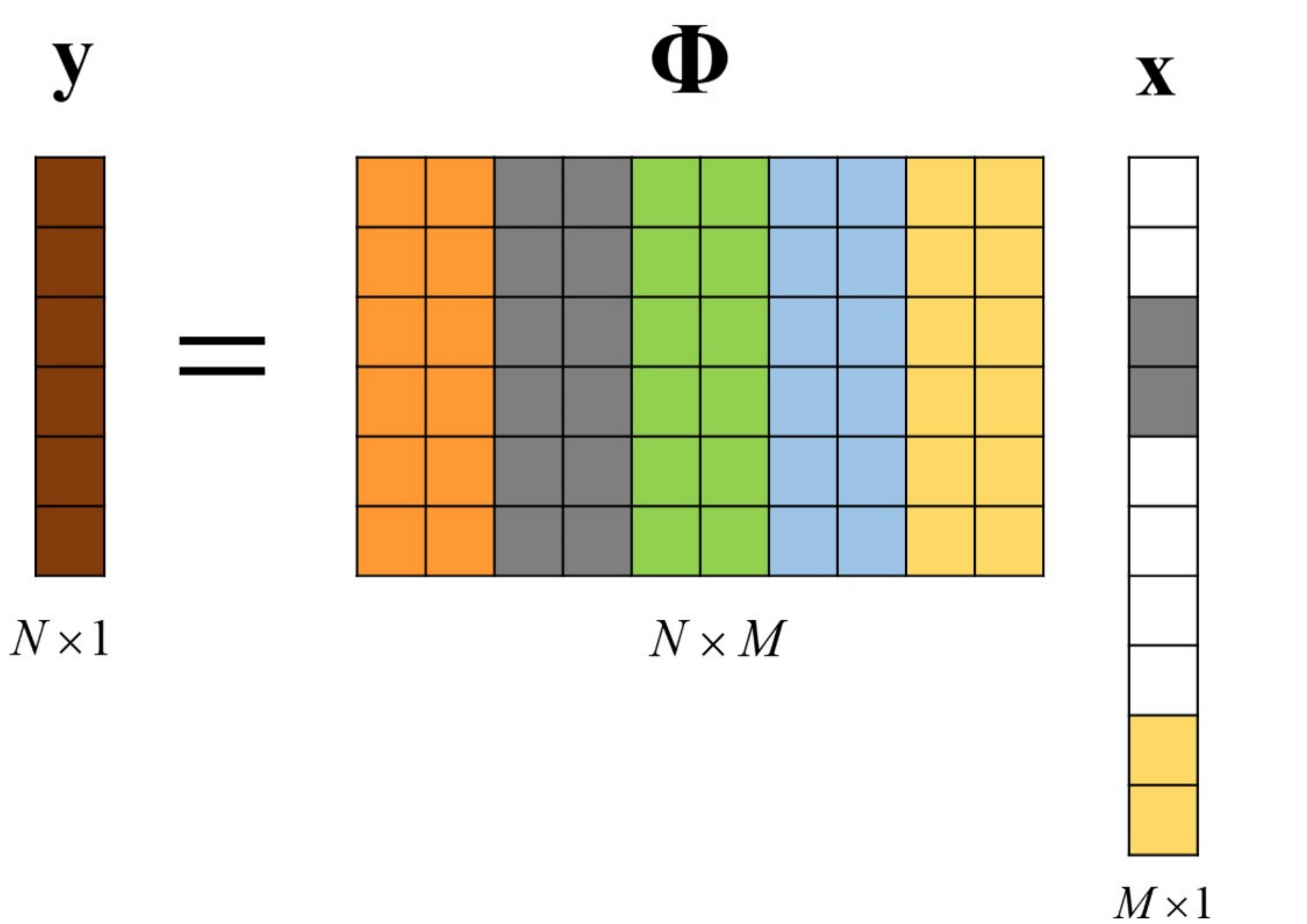}
		}
		\hspace{.9in}
		\subfigure[]{ 
			\label{fig:gram}
			\includegraphics[width=0.2\columnwidth]{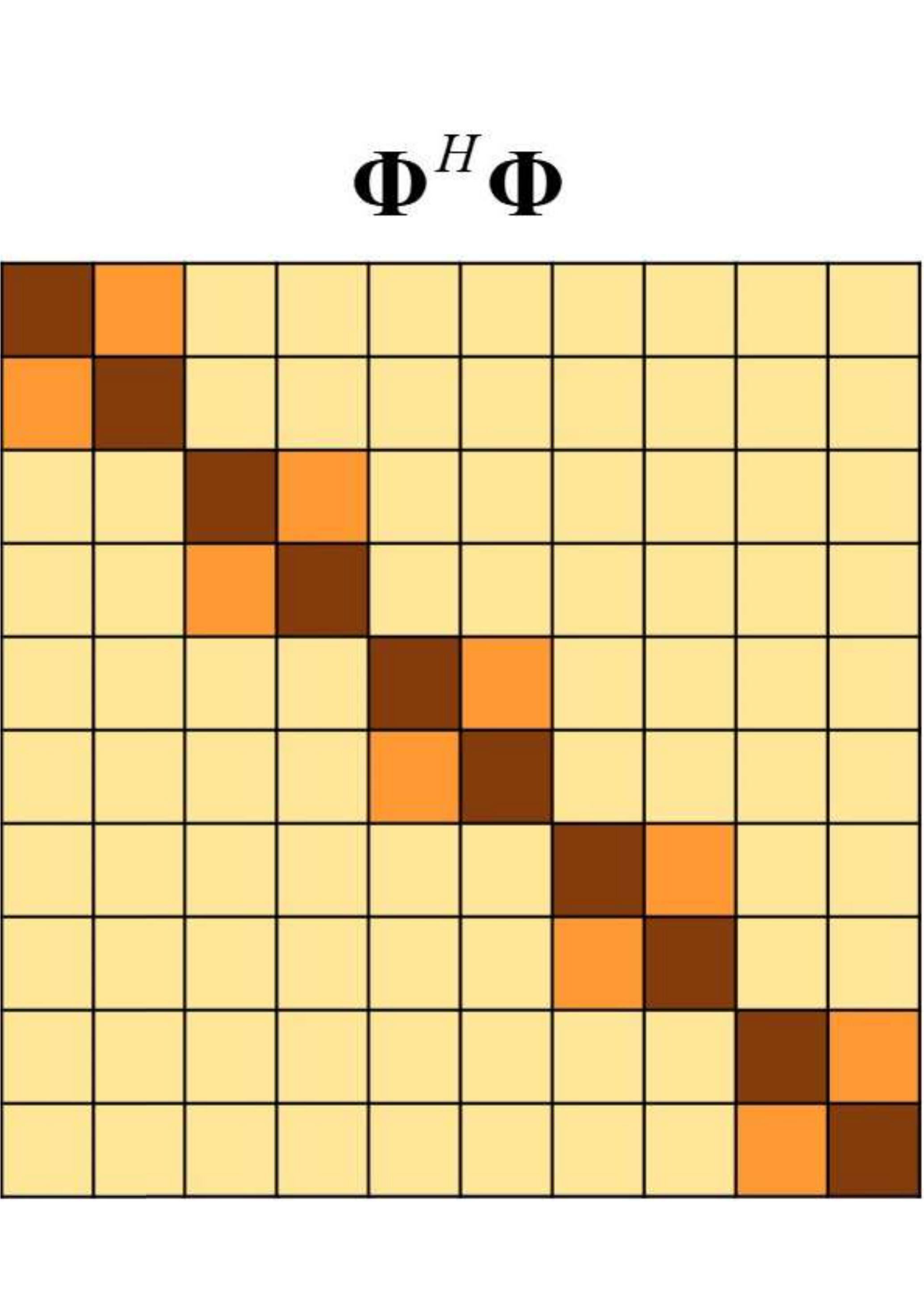}
		}
		\caption{(a) A visualization of the block-sparse signal model where each color in $\bf{\Phi}$ and $\bm x$ corresponds to a different block. White corresponds to zero-entries in $\bm x$. (b) Visualization of the Gram matrix ${\bm \Phi}^H{\bm \Phi}$ with block-structure.
		}
		
	\end{figure}

	\subsection{Traditional Iterative Algorithms}
	\label{sec:Traditional}

	In this subsection, we briefly review \ac{ista} and Block-ISTA, which are designed to solve sparse and block-sparse problems, respectively.

	In the framework of \ac{cs}, the following $\ell_1$-regularized regression is a well-known technique to harness prior knowledge of sparsity, given by
	\begin{equation} \label{eq:sparseregression}
	\mathop { \min }\limits_{\bm x} \frac{1}{2}\left\| {{\bm y - \bm \Phi \bm x}} \right\|_2^2 + \lambda {\left\| \bm x \right\|_1},
	\end{equation}
	where $\lambda$ is a regularization parameter controlling the sparsity penalty characterized by minimizing the $\ell_1$ norm.

	Standard \ac{ista} solves the above optimization problem
	by iteratively performing Lipschitz gradient descent with respect to the measurement error \cite{Beck2009A,Wu2020Sparse}. 
	Specifically, the sparse solution in the $(t+1)$-th iteration, denoted by $\bm x^{(t+1)}$, is pursued by the following recursion:

	\begin{equation}\label{eq:ISTA}
	\begin{aligned}
	\bm x^{(t+1)}  = {\eta_{\frac{\lambda }{L}}}\left( {\frac{1}{L}{\bm \Phi ^H}\bm y + \left( {\bm I - \frac{1}{L}{\bm \Phi ^H}\bm \Phi } \right){\bm x^{(t)}}} \right),\\
	\end{aligned}
	\end{equation}
	where 
	$L$ is the Lipschitz constant, given by $L = {\lambda _{\max }}( {{\bm \Phi ^H}\bm \Phi } )$, and $\lambda _{\max }(\cdot)$ represents the maximum eigenvalue of a Hermitian matrix. The element-wise soft-threshold operator $\eta$ is defined as

	\begin{equation}\label{eq:Soperator}
	\left[ {\eta_{\theta} }\left( \bm{u} \right) \right]_i= \text{sign} \left( {{[\bm{u}]_i}} \right)\left( {\left| {{[\bm{u}]_i}} \right| - \theta} \right)_ + ,
	\end{equation}
	where sign$ (\cdot)$ returns the sign of a scalar, $ (\cdot)_+ $ 
	gets the positive part of a scalar, 
	and $ \theta > 0 $ is the threshold.
	
	While the signal of interest $\bm x$ possesses block-sparse structure, 
	many researchers have designed specialized algorithms in principle to utilize the block-sparse structure, such as block-OMP \cite{bomp}, block-based CoSaMP \cite{2010BCoSaMP} and block-sparse Bayesian learning \cite{BlockBayesian}. 
	To compute an estimate which maintains a block-sparse structure, \eqref{eq:sparseregression} turns into the following mixed-norm optimization problem based on the partition of blocks,
	
	\begin{equation} \label{eq:blocksparseregression}
	\mathop { \min }\limits_{\bm x} \frac{1}{2}\left\| \bm y - \bm \Phi \bm x \right\|_2^2 + \lambda \left\|\bm x\right\|_{2,1},
	\end{equation}
	where the $\ell_{2,1}$ norm is defined as 
	$\left\|\bm x\right\|_{2,1} = \sum\limits_{q = 1}^Q { \left\|\bm x_q\right\|_{2} } = \sum\limits_{q = 1}^Q {\sqrt {\sum\limits_{p = 1}^P {{x_{p,q}^2}} } }.$

In principle, \ac{ista} can be extended in the block-sparse setup to derive an algorithm named Block-ISTA, which performs two steps (gradient descent and soft-thresholding) for every block $q \in [1,Q]$ individually as follows:

\begin{equation}\label{eq:Block-ISTA}
\begin{aligned}
\bm z_q^{(t+1)}  
& = {\bm x}_q^{(t)} + \frac{1}{L}{\bm \Phi_q ^H} \left(\bm y  - \bm \Phi {\bm x^{(t)}} \right),\\
{\bm x}_q^{(t+ 1)} 
& = {\bm z}_q^{(t+ 1)}{\left( 1 - {\theta} / \left\| {\bm z}_q^{(t+ 1)} \right\|_2 \right)_{ + }},
\end{aligned}
\end{equation}
where the threshold $ \theta > 0 $ is compared with the $\ell_{2}$ norm of each block, rather than the absolute value of each element in $\bm x$. 
Comparing \eqref{eq:ISTA} with \eqref{eq:Block-ISTA}, one can see that the same two-step process is being taken, where the second step in \eqref{eq:Block-ISTA} is a block-wise soft-thresholding operation derived from the proximal operator of the $\ell_{2,1}$ norm.
This thresholding operation forces the updated signal ${\bm z}^{(t+ 1)}$ in the previous step to be block-sparse: we use block-wise soft thresholding and set blocks in ${\bm z}^{(t+ 1)}$ to $\bm 0$ if its $\ell_{2}$ norm of the block ${\bm z}_q^{(t+ 1)}$ is less than $ {\theta} $.

\subsection{Brief Review of Learned \ac{lista} Networks}
\label{ssec:Ada-LISTA}

\ac{ista} demonstrates considerable accuracy on recovering sparse signals, but takes a lot of time to converge \cite{Draganic2017On}.
As an unfolded version of \ac{ista} iterations, \ac{lista} is a \ac{rnn} containing only $T$ layers, where each layer is corresponding to an \ac{ista} iteration which computes an estimate as follows \cite{Gregor2010Learning}.
\begin{equation}\label{eq:LISTA}
\bm{x}^{(t + 1)} = \eta_{{\theta}^{(t)}}\left( {\bm W_e}\bm y + {\bm W_g}{ \bm{x}^{(t)} }  \right).
\end{equation}
Here, the terms ${\lambda /L}$, ${\frac{1}{L}{\bm \Phi ^H}\bm y}$ and ${\left( {\bm I - \frac{1}{L}{\bm \Phi ^H}\bm \Phi } \right)}$ in \eqref{eq:ISTA} are replaced by learned variables ${\theta}^{(t)}$, ${\bm W_e}\in {\mathbb{C}^{M \times N}}$ and ${\bm W_g}\in {\mathbb{C}^{M \times M}}$, respectively. 
The matrices ${\bm W_e} $ and ${\bm W_g} $ are named the \emph{filter matrix} and the \emph{mutual inhibition matrix}, respectively, which are learned from training data.

However, once the learning process is completed, the learned network variables ${\bm W_g}$ and ${\bm W_e}$ are tailored for the specific dictionary matrix $ \bm \Phi$. 
In general, \ac{lista} requires that during the inference stage, the test signals be drawn from the same distribution as the training samples. 
That is to say, a trained \ac{lista} fails to recover signals which are not generated from the underlying dictionary in the training data. 

To build a more generalized network for various dictionary choices,
\ac{adalista} has been proposed to increase the adaptability of this deep unfolding technique \cite{2020Ada}. 
One iteration of \ac{adalista} is defined as,
\begin{eqnarray}\label{eq:adalista}
\bm{x}^{(t + 1)} = \eta_{{\theta}^{(t)}}\left( {\gamma}^{(t)}{\bm \Phi}^H\!\!{\bm W_2}\!\!^H\!\! \bm y + \left( {\bm I} - {\gamma}^{(t)}{\bm \Phi}^H\!\!{\bm W_1}\!\!^H\!\!{\bm W_1}\!\!{\bm \Phi} \right) { \bm{x}^{(t)} }  \right),
\end{eqnarray}
where ${\bm W_1}, {\bm W_2}\in {\mathbb{C}^{N \times N}}$ are shared across different layers, while ${\theta}^{(t)}$ and ${\gamma}^{(t)}$ are the soft threshold and the step size of the $t$-th layer. 

Comparing \eqref{eq:LISTA} and \eqref{eq:adalista}, we can say that ${\bm W_e}$ in \ac{lista} corresponds to ${\gamma}^{(t)}{\bm \Phi}^H{\bm W_2}^H$ in \ac{adalista}, while ${\bm W_g}$ in \ac{lista} corresponds to $\left( {\bm I} - {\gamma}^{(t)}{\bm \Phi}^H{\bm W_1}^H{\bm W_1}{\bm \Phi} \right)$ in \ac{adalista}. Instead of using the fixed network variables ${\bm W_g}$ and ${\bm W_e}$, the learned \emph{filter matrix} and \emph{mutual inhibition matrix} in \ac{adalista} make use of information in dictionary matrix ${\bm \Phi}$, so that it can adapt to various models.
Furthermore, \ac{adalista} with a single weight matrix is defined by
\begin{equation}
\label{eq:adalista_singleweight}
\bm{x}^{(t + 1)} = \eta_{{\theta}^{(t)}}\left( \bm{x}^{(t)} + {\gamma}^{(t)}{\bm \Phi}^H {\bm W_2}^H \left( \bm y - {\bm \Phi} \bm{x}^{(t)} \right) 
\right),
\end{equation}
which can be viewed as a reduced version of \eqref{eq:adalista} if we have $ {\bm W_2} = {\bm W_1}^H{\bm W_1}$.
The single weight matrix instantiation of AdaLISTA is convenient for proving its theoretical performance guarantees.

In the next section, we further utilize specific knowledge of block structure in \ac{adalista} to construct a block-sparse reconstruction network.

\section{Proposed Block-Sparse Recovery Network}
\label{sec:Proposed}

In this section, we outline our proposed Ada-BlockLISTA network. Additionally, we outline the training and testing details in order to utilize Ada-BlockLISTA in experimental applications.

\subsection{Proposed Network Structure}
\label{ssec:Ada-BlockLISTA}

Motivated by $\ell_{2,1}$ minimization method such as Block-ISTA, we explore the block structure of the signal model, and propose our Ada-BlockLISTA network, which is derived from \ac{adalista}.
Fig.~\ref{fig:frameLISTA} illustrates the network structure of both \ac{adalista} and Ada-BlockLISTA.

\begin{figure}[t]
	\centering 
	\subfigure[]{ 
		\includegraphics[width=0.4\columnwidth]{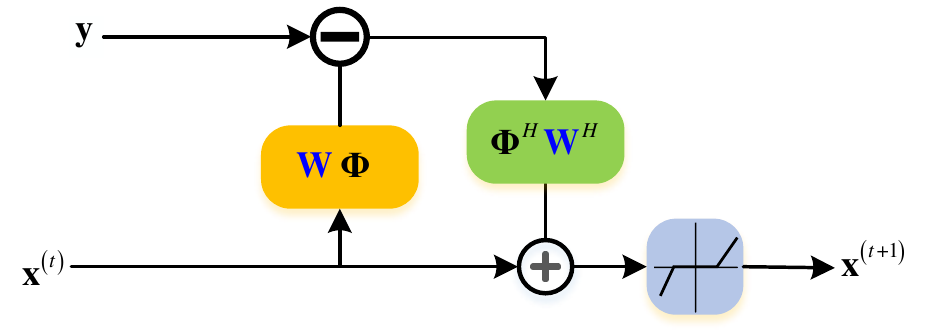}
	} 
	\subfigure[]{ 
		\includegraphics[width=0.4\columnwidth]{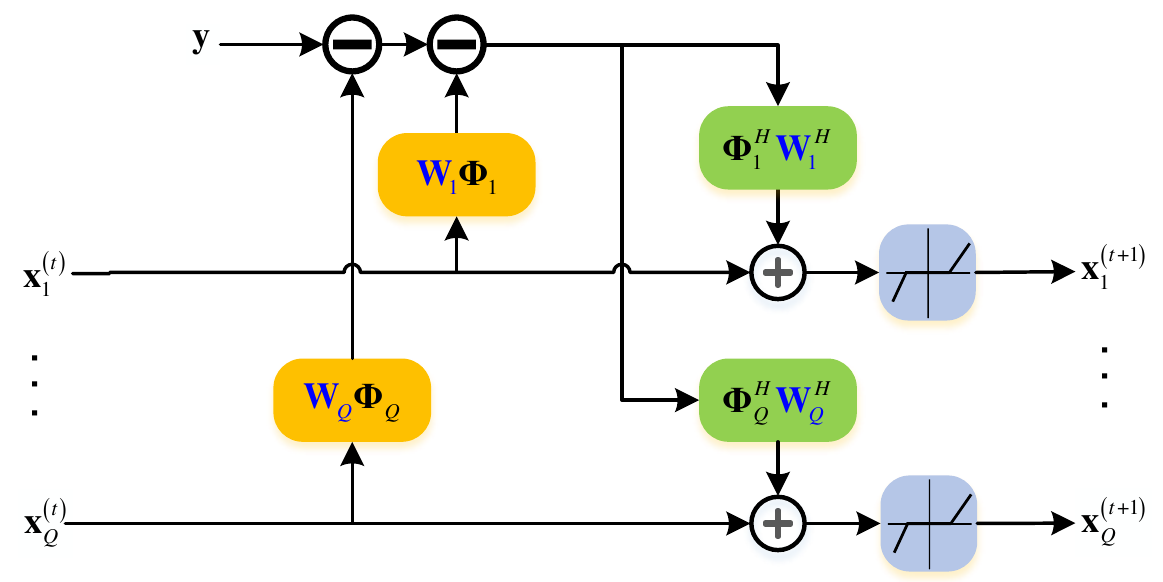}
	} 
	\caption{The Block diagram of (a) \ac{adalista} and (b) Ada-BlockLISTA architecture in one layer.}
	\label{fig:frameLISTA}
\end{figure}

As ${\bm \Phi}$ and $\bm{x}^*$ share the same block structure, at any iteration $t$, we can rewrite $ {\bm \Phi} \bm{x}^{(t)},$ as  $\sum\limits_{i = 1}^Q {{{\bm \Phi}_i}{\bm x}_i^{(t)}} $ with respect to the estimate at the $t$-th iteration, $\bm{x}^{(t)}$.
In Block-ISTA, each individual block in $\bm x$ is updated via a gradient projection and soft threshold step, independently. 
Thus, we learn an individual weight matrix ${\bm W}_q$ for each $q$-th block.
For the nonlinear shrinkage function, we also encourage block sparsity by applying soft thresholding to each block individually. 
At the $t$-th layer of Ada-BlockLISTA, the update rule of the $q$-th block in the block-sparse signal is formulated as,
\begin{subequations}\label{eq:AdaBlockLISTA}
	\begin{align}
	{\bm z}_q^{(t+ 1)} 
	& = {\bm x}_q^{(t)} + {\gamma}^{(t)}{\bf{\Phi }}_q^H{\bm W}_q^H {\bm r^{(t)}}, \label{eq:AdaBlockLISTA-sub1}\\
	{\bm x}_q^{(t+ 1)} 
	& = {\bm z}_q^{(t+ 1)}{\left( 1 - {\theta ^{(t)}} / \left\| {\bm z}_q^{(t+ 1)} \right\|_2 \right)_{ + }},
	\label{eq:AdaBlockLISTA-sub2}
	\end{align}
\end{subequations}
where ${\bm r^{(t)}} = \bm y - \sum\limits_{i = 1}^Q { {{\bm \Phi}_i}{\bm x}_i^{(t)}} $ is the residual at the $t$-th layer.
Consistent with previous algorithms discussed, Ada-BlockLISTA consists of two main steps.
First, we perform block-wise gradient descend and update the value of each block according to the current residual value ${\bm r^{(t)}}$. 
Then, to force the updated signal ${\bm z}_q^{(t+ 1)}$ in the previous step to be block-sparse, we use block-wise soft thresholding and set blocks in ${\bm z}_q^{(t+ 1)}$ to $\bm 0$ if its $\ell_{2}$ norm is less than $ {\theta}^{(t)} $.

If we force the matrices $\{{\bm W}_q\}_{q=1}^{Q}$ to be the same across different blocks, the term ${\bf{\Phi }}_q^H{\bm W}_q^H$ in \eqref{eq:AdaBlockLISTA} corresponds to the $q$-th block of ${\bm \Phi}^H {\bm W_2}^H$
in \eqref{eq:adalista_singleweight}.
Moreover, comparing \eqref{eq:adalista_singleweight} and \eqref{eq:AdaBlockLISTA}, \ac{adalista} and Ada-BlockLISTA use the soft threshold parameters $\{ {\theta}^{(t)} \}_{t=0}^{T}$ in different ways.
In \ac{adalista} with a single weight, we perform element-wise soft thresholding and set each element in ${\bm x}^{(t)}$ with an absolute value less than $ {\theta}^{(t)} $ to $0$.
On the other hand, we use $ {\theta}^{(t)} $ as a block-wise threshold in Ada-BlockLISTA, which is compared with the $\ell_{2}$ norm of every single block ${\bm z}_q^{(t)}$ rather than the absolute value of elements.

Note that the inputs to our reconstruction network are complex-value data, thus we transform every operator above to its complex value counterparts by following complex-value extension methods presented in \cite{Fu2019,2019Complex}.

\subsection{Network Training and Testing Details}

Here, we describe some details when training our proposed network, including training strategies and the testing process.

To train networks in a supervised way,
we first prepare $N_{tr} = 50,000
$ training samples with known labels, i.e., the ground truth $\bm x^*$, under \eqref{eq:system}. We also generate another $N_{vl} = 1000$ and $N_{ts} = 5000$ samples for validation and testing, respectively. The validation data set is used for determining some hyperparameters of the network. 
The initial values of ${\bm W_1}, {\bm W_2}$ (for \ac{adalista}), $\{{\bm W}_q\}_{q=1}^{Q}$ (for Ada-BlockLISTA) are chosen as identity matrices in the experiments. 

With these labeled training data and initialized parameters, we train the network implemented using TensorFlow with the strategy described below.

We choose \ac{nmse} of the recovered signal at the $T$-th  layer normalized by the norm of ground truth $\bm x^*$ as the loss function, which is defined as
\begin{equation}\label{eq:NMSE1}
\mathrm{NMSE} = 
{\rm E}{ 
	\frac{\norm{ \bm x^* - {\bm x}^{(T)} }_2}
	{\norm{ \bm x^* }_2}
}.
\end{equation}

Then we use Adam as optimizer \cite{kingma2014adam} 
and start optimizing the network parameters $\bm{\Theta}^{(t)} = \{{\bm W}_1, \cdots,{\bm W}_Q,\theta^{(t)}, {\gamma}^{(t)}\} $ towards minimizing the loss function on the training set with the learning rate initialed as $lr_0 = 0.0005$, which will be modified during training.
During the training process, we check the network performance and refine network hyperparameters by calculating the loss function on the validation data set.


Compared to classical ISTA and Block-ISTA which take nearly $100$ iterations or even more, \ac{adalista} and Ada-BlockLISTA networks improve the convergence speed by one order of magnitude or more, reconstructing signals within $10$ layers. 
By using the block information in both linear/nonlinear layers, our proposed network exhibits better block recovery performance against the number of non-zero blocks in recovered signals.


The testing process of Ada-BlockLISTA is detailed in Algorithm~\ref{alg:Algo1}.

\begin{algorithm}
	\caption{Ada-BlockLISTA Inference Algorithm}
	\label{alg:Algo1}
	\begin{algorithmic}[1]
		\STATE \underline{Input}: observation $\bm y$, dictionary $\bm \Phi$
		\STATE \underline{Initialization}: Set $t=0$, and generate an initial guess ${\bm x}_q^{(0)}= \bm 0$, $q=1,\cdots,Q$. 
		\STATE \label{stp:res} Compute the residual value ${\bm r^{(t)}} = \bm y - \sum\limits_{i = 1}^Q { {{\bm \Phi}_i}{\bm x}_i^{(t)}} $;
		\FOR {each $q \in [1,Q]$}
		\STATE \label{stp:IC} {\em Gradient descend:} refer to \eqref{eq:AdaBlockLISTA-sub1}
		\STATE \label{stp:SoftDec} {\em Block-wise soft thresholding:} 
		refer to \eqref{eq:AdaBlockLISTA-sub2}
		\ENDFOR
		\STATE Set $t := t+1$. If $t < T $ return to Step \ref{stp:res}.	
		\STATE  \underline{Output}: block-sparse signal $\bm x_T$.
	\end{algorithmic}
\end{algorithm}

\section{Convergence Analysis}
\label{sec:Convergence}


In this section, we state the main theoretical results of this paper, i.e., the linear convergence rate and support recovery guarantee of Ada-BlockLISTA.
First, we introduce some essential tools.
Throughout this section, $\bm x^*$ is a block-sparse signal, as in \eqref{eq:blockx}, comprised of $Q$ blocks of length $P$, where the number of non-zero blocks of $\bm x^*$ is $s$.

\noindent\textbf{Assumption 1}: The $\ell_2$ norm of $\bm x^*_q$ is bounded by $\zeta$, for all $q = 1, \cdots, Q.$
We construct the set,
\begin{equation}
\begin{aligned}
\mathcal{X} (\zeta, s) \coloneqq   \big\{  \bm v \in \mathbb{C}^M \mid & \| { \bm v_q} \|_2 \leq \zeta, \forall q = 1, \cdots, Q, \\
& \|  \bm v\|_{2,0} \leq s \big\},
\end{aligned}
\end{equation}
to be the set of signals which have the same block-sparsity structure as $\bm x^*$. 

\begin{proposition}\label{prop:noisebound}
	Let $\bm \varepsilon \in \mathbb{C}^N$ be a complex standard normal random variable. Then, with probability $1 - \delta$ for $\delta \in (0,1),$ 
	$$\|\bm \varepsilon\|_2 < \sigma,$$
	where $\sigma := \sigma(N, \delta)$ is a constant which depends on the dimensions of the noise vector and the chosen tightness on the bound.
\end{proposition}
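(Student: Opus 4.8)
The plan is to reduce the statement to a standard tail bound for the chi-squared distribution. First I would fix the convention that a complex standard normal vector $\bm \varepsilon \in \C^N$ has entries $\varepsilon_j = a_j + i\,b_j$ with $a_j, b_j$ real i.i.d.\ Gaussians, so that $\norm{\bm \varepsilon}_2^2 = \sum_{j=1}^N (a_j^2 + b_j^2)$ is a sum of $2N$ independent squared Gaussians. Up to the scaling fixed by the normalization, this means $\norm{\bm \varepsilon}_2^2$ follows a (scaled) chi-squared distribution with $2N$ degrees of freedom.

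The key observation is that $\norm{\bm \varepsilon}_2$ is a nonnegative random variable whose cumulative distribution function $F(t) = \Pr(\norm{\bm \varepsilon}_2 \le t)$ is continuous and strictly increasing on $(0, \infty)$, since the chi-squared density is strictly positive there and has no atoms. Consequently, for every $\delta \in (0,1)$ the equation $F(t) = 1 - \delta$ has a unique solution, and I would define $\sigma(N, \delta)$ to be exactly this $(1-\delta)$-quantile. With this definition $\Pr(\norm{\bm \varepsilon}_2 < \sigma) = 1 - \delta$, which proves the claim and makes explicit that $\sigma$ depends only on the dimension $N$ (through the degrees of freedom) and on the chosen confidence level $\delta$.

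If an explicit closed form is preferred over the implicit quantile, I would instead apply a Chernoff-type concentration inequality for the chi-squared law. Using the Laurent--Massart bound, for $X \sim \chi^2_{2N}$ one has $\Pr\bigl(X \ge 2N + 2\sqrt{2N x} + 2x\bigr) \le e^{-x}$; setting $x = \log(1/\delta)$ and taking square roots yields the admissible choice
\begin{equation*}
\sigma(N, \delta) = \sqrt{2N + 2\sqrt{2N \log(1/\delta)} + 2\log(1/\delta)},
\end{equation*}
for which $\Pr(\norm{\bm \varepsilon}_2 \ge \sigma) \le \delta$. I would present whichever of these two routes better suits the downstream use: the quantile gives the tightest possible $\sigma$, while the closed form is more convenient if $\sigma$ must be tracked explicitly through later convergence estimates.

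I do not expect any genuine obstacle here, as the result is a routine Gaussian concentration statement. The only points requiring care are bookkeeping ones: pinning down the normalization convention for the complex normal (which fixes the scaling of the chi-squared and hence the precise value of $\sigma$), and being consistent about strict versus non-strict inequality, which is harmless since the distribution is atomless.
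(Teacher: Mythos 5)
Your proposal is correct, but there is nothing in the paper to compare it against: the paper states Proposition~\ref{prop:noisebound} without any proof, treating it as a standard fact whose only role is to define the constant $\sigma$ that is then fed into the threshold rule \eqref{eq:theta} and the error bound of Theorem~\ref{thm:convergence1}. Your argument correctly supplies what the authors leave implicit. Both of your routes work: the quantile construction (using that $\|\bm\varepsilon\|_2^2$ is, up to scaling, $\chi^2_{2N}$-distributed with a continuous, strictly increasing CDF on $(0,\infty)$) gives the tightest admissible $\sigma$ and matches the proposition's deliberately nonconstructive phrasing ``$\sigma(N,\delta)$ is a constant which depends on the dimensions \dots and the chosen tightness,'' while the Laurent--Massart bound gives an explicit $\sigma$ that could be tracked through the constants $c_2$ in the convergence analysis. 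One bookkeeping point you rightly flag deserves emphasis: under the usual convention $\bm\varepsilon \sim \mathcal{CN}(\bm 0, \bm I_N)$, each entry has real and imaginary parts of variance $1/2$, so $2\|\bm\varepsilon\|_2^2 \sim \chi^2_{2N}$ and your explicit formula should be divided by $\sqrt{2}$, i.e. $\sigma(N,\delta) = \bigl(N + \sqrt{2N\log(1/\delta)} + \log(1/\delta)\bigr)^{1/2}$; since the paper never pins down its normalization, either convention is defensible so long as it is stated. Also note the proposition claims probability exactly $1-\delta$, which your quantile route delivers (atomlessness making strict versus non-strict inequality immaterial), whereas the Chernoff route gives probability at least $1-\delta$ --- the latter is what the downstream use in Theorem~\ref{thm:convergence1} actually requires, so either reading suffices.
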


Next, it is necessary to recall two basic concepts of coherence \cite{Block-Yonina}, i.e., the mutual coherence within a block, also referred to as sub-coherence, and block-coherence of the dictionary $\bm \Phi$, which describe its local and global properties, respectively.

\begin{definition}
	Let $\bm{A}, \bm{B}$ be two matrices with consistent dimensions such that $\bm a^H_{i} \bm b_i = 1$, where $\bm a_i$ and $\bm b_i$ are the $i$-th columns of $\bm A$ and $\bm B$, respectively. 
	The mutual coherence between $\bm A$ and $\bm B$ is defined as,
	\begin{equation}
	\mu(\bm A, \bm B) = \max_{i \neq j} | \bm a^H_{i} \bm b_j |.
	\end{equation}
\end{definition}

\begin{definition}
	The sub-coherence of the dictionary ${\bm \Phi}$ characterizes the coherence between different columns within each sub-matrix ${\bm \Phi}_q \in {\mathbb{C}^{N \times P}}, \quad q = 1,2, \cdots, Q$, with normalized columns, which is defined as,
	\begin{equation}\label{eq:sub_coherence}
	{\nu _ I}
	=\max_{1 \le q \le Q}\max_{i \ne j}\left| {\bm \phi}_{i,q}^H {\bm \phi}_{j,q} \right|, \quad
	{\bm \phi}_{i,q},{\bm \phi}_{j,q}\in {\bm \Phi}_q.
	\end{equation}
\end{definition}
The sub-coherence of ${\bm \Phi}$ takes the maximum over all the absolute value of the off-diagonal entries in each diagonal block in the Gram matrix ${\bm \Phi}^H{\bm \Phi}$ (the orange entries in Fig. \ref{fig:gram}). 

\begin{definition}
	The block-coherence of the dictionary ${\bm \Phi}$ with normalized columns characterizes the coherence between different blocks, which is defined as, 
	\begin{equation}\label{eq:block_coherence}
	{\mu _ B} = \max_{1 \le q \le Q} \max_{q', q \ne q'}{{1}\over{P}}\left\| {\bm \Phi}_q^H {\bm \Phi}_{q'} \right\|_s,
	\end{equation}
	where $\norm{\cdot}_s$ denotes the spectral norm.
\end{definition}
The block-coherence of ${\bm \Phi}$ depends on the maximum over the spectral norm of all the off-diagonal blocks in the Gram matrix ${\bm \Phi}^H{\bm \Phi}$  (the yellow entries in Fig. \ref{fig:gram}).

With the definitions above, we restate Theorem 3 of \cite{Block-Yonina}, which shows that the exact recovery condition holds universally under certain conditions on the sub-coherence and block-coherence of the dictionary $\bm \Phi$.

\begin{theorem}\label{thm:convergence_Yonina}
	Let ${\mu _ B}$ be the block-coherence and ${\nu _ I}$ the sub-coherence of the dictionary $\bm \Phi$. A sufficient condition to recover $\bm x^*$ through Block-ISTA is that
	\begin{equation}\label{eq:s_Yonina}
	sP < \frac{1}{2}\left( { {\mu} _ B} ^{ - 1} + P - (P - 1)\frac{{ {\nu} _ I}}{{ {\mu} _ B}} \right).		
	\end{equation}
\end{theorem}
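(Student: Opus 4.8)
The plan is to establish a block-analogue of Tropp's exact recovery condition (ERC) and then verify it under the stated coherence hypothesis. Since Block-ISTA is a convergent first-order method for the convex $\ell_{2,1}$-regularized program \eqref{eq:blocksparseregression}, it suffices to certify that the minimizer of that program is supported on the correct blocks (and, in the noiseless regime, equals $\bm x^*$); Block-ISTA then attains this minimizer in the limit. Let $\Lambda \subseteq \{1,\dots,Q\}$ collect the $s$ active blocks of $\bm x^*$ and let $\bm \Phi_\Lambda = [\bm \Phi_q]_{q \in \Lambda}$ be the corresponding sub-dictionary. The block-ERC then reduces to showing that for every inactive block $q \notin \Lambda$ one has $\| (\bm \Phi_\Lambda^H \bm \Phi_\Lambda)^{-1} \bm \Phi_\Lambda^H \bm \Phi_q \|_\star < 1$, where $\| \cdot \|_\star$ is the operator norm induced by the block norm $\bm v \mapsto \sum_q \| \bm v_q\|_2$, that is, $\| \bm A \|_\star = \max_{q'} \sum_q \| \bm A_{q,q'}\|_s$ with $\bm A_{q,q'}$ the $(q,q')$ sub-block and $\| \cdot \|_s$ the spectral norm.

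The core of the argument is to control the block Gram matrix $\bm G_\Lambda := \bm \Phi_\Lambda^H \bm \Phi_\Lambda$. Writing $\bm G_\Lambda = \bm I + \bm E$, I would bound $\| \bm E \|_\star$ by separating its two types of contributions. The diagonal sub-blocks $\bm \Phi_q^H \bm \Phi_q - \bm I$ have vanishing diagonal and off-diagonal entries of modulus at most $\nu_I$, so Gershgorin's theorem yields $\| \bm \Phi_q^H \bm \Phi_q - \bm I \|_s \le (P-1)\nu_I$; each of the $s-1$ off-diagonal sub-blocks $\bm \Phi_q^H \bm \Phi_{q'}$ has spectral norm at most $P\mu_B$ directly from the definition \eqref{eq:block_coherence}. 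Summing over a single block-column therefore gives $\| \bm E \|_\star \le (P-1)\nu_I + (s-1)P\mu_B$. Provided this quantity is below $1$, a Neumann-series expansion of $(\bm I + \bm E)^{-1}$ bounds $\| \bm G_\Lambda^{-1} \|_\star \le [\,1 - (P-1)\nu_I - (s-1)P\mu_B\,]^{-1}$; meanwhile the numerator satisfies $\| \bm \Phi_\Lambda^H \bm \Phi_q \|_\star \le sP\mu_B$, since that block-column stacks $s$ cross-blocks each of spectral norm at most $P\mu_B$.

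Combining these two estimates through submultiplicativity of $\| \cdot \|_\star$, the block-ERC holds as soon as $\frac{sP\mu_B}{1 - (P-1)\nu_I - (s-1)P\mu_B} < 1$, which rearranges to $(2s-1)P\mu_B + (P-1)\nu_I < 1$; multiplying through by $\mu_B^{-1}$ and isolating $sP$ recovers exactly the threshold \eqref{eq:s_Yonina}. The main obstacle I anticipate is not the algebra but the choice of matrix norm: measuring $\bm \Phi_\Lambda^H \bm \Phi_q$ in the ordinary spectral norm would inject a loose $\sqrt{s}$ factor, whereas the block-$\ell_{2,1}$ operator norm is precisely what produces the tight $(2s-1)$ constant, exactly as the $\ell_1 \to \ell_1$ norm does in the scalar Tropp argument. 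Accordingly, the two steps demanding the most care are verifying submultiplicativity of $\| \cdot \|_\star$ and confirming that the resulting block-ERC genuinely certifies uniqueness of the $\ell_{2,1}$-minimizer via a dual certificate supported on $\Lambda$; both mirror the development in \cite{Block-Yonina}, whose Theorem 3 this statement restates.
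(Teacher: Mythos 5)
The paper offers no proof of this theorem---it is explicitly a restatement of Theorem 3 of \cite{Block-Yonina}---and your proposal faithfully reconstructs the cited source's argument: your norm $\|\cdot\|_\star$ is exactly Eldar et al.'s block norm $\rho_c$ (the operator norm induced by $\bm v \mapsto \sum_q \|\bm v_q\|_2$, hence submultiplicative), and the Gershgorin bound $\|\bm \Phi_q^H \bm \Phi_q - \bm I\|_s \le (P-1)\nu_I$, the Neumann-series bound on $\bm G_\Lambda^{-1}$, the numerator bound $sP\mu_B$, and the final rearrangement of $(2s-1)P\mu_B + (P-1)\nu_I < 1$ into \eqref{eq:s_Yonina} are all correct and match that development. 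The only imprecision is your claim that the minimizer of the $\lambda$-regularized program \eqref{eq:blocksparseregression} equals $\bm x^*$ in the noiseless case (it is shrinkage-biased for fixed $\lambda>0$; exact recovery is a statement about the equality-constrained $\ell_{2,1}$ program or the limit $\lambda \to 0$), but this looseness is inherited from the theorem's own phrasing in terms of ``recovery through Block-ISTA'' rather than introduced by your argument.
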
	

Here we generalize the definitions above and illustrate the linear convergence rate of our Ada-BlockLISTA in the following theorem.

\begin{theorem}[Convergence rate of Ada-BlockLISTA]\label{thm:convergence1}
	Suppose that we are given the matrix $\bm \Phi \in \mathbb{C}^{N \times M}$ that follows the structure defined in \eqref{eq:blockdict}.
	Let $\{{\bm x}^{(t)}\}_{t=1}^{\infty}$
	be generated by a trained Ada-BlockLISTA with parameters $\{{\bm W}_1, \cdots,{\bm W}_Q,\theta^{(t)}, {\gamma}^{(t)}\}_{t=0}^{\infty}$ in \eqref{eq:AdaBlockLISTA} 
	and initialized as ${\bm x}^{(0)}= \bm 0$.
	
	Consider the three quantities,
	\begin{align}
	\label{eq:g_sub_coherence}
	\tilde{\nu }_ I &= 
	\underset{t\geq 0}{\max}~ 
	\max_{1 \le q \le Q}
	\Big\{ 
	\max _{i \neq j \atop 1 \leq i, j \leq P} \left| {\gamma}^{(t)} {\bm \phi}_{i,q}^H \bm{W}_q {\bm \phi}_{j,q} \right| 
	\Big\}, \\ 
	\label{eq:g_block_coherence}
	\tilde{\mu }_ B &= 
	\underset{t\geq 0}{\max}~ 
	\max_{1 \le i \le Q}
	\Big\{ 
	\max_{j, i \ne j}{{1}\over{P}}\left\| {\gamma}^{(t)} {\bm \Phi}_i^H \bm{W}_i {\bm \Phi}_j \right\|_s
	\Big\},  
	\\
	\label{eq:C_W}
	C_W &= 
	\underset{t\geq 0}{\max}~ 
	\max_{1 \le q \le Q}
	\| {\gamma}^{(t)} {\bm \Phi}_q^H \bm{W}_q \|_{2,1}.
	\end{align}
	
	Then, suppose $s$ to be sufficiently small such that
	\begin{equation}\label{eq:s}
	s < \frac{1}{{2P}}\left( { \tilde{\mu} _ B} ^{ - 1} + P - (P - 1)\frac{{ \tilde{\nu} _ I}}{{ \tilde{\mu} _ B}} \right).
	\end{equation}
	Consider the system model \eqref{eq:system};
	let $\bm x^* \in \mathcal{X}(\zeta, s)$, and $\bm \varepsilon$ represent additive random noise and $\sigma$ be the high-probability upper bound from Proposition \ref{prop:noisebound}.
	Assume that the thresholds $\theta ^{(t)}$ are
	\begin{equation}\label{eq:theta}
	\theta ^{(t)} = P\tilde{\mu }_ B \sup\limits_{\bm{x}^* \in \mathbb{C}^M} \| {\bm x}^{(t)} - {\bm x}^* \|_{2,1}+ C_{\bm W} \sigma. 
	\end{equation}
	Then, with probability $1 - \delta$ for $\delta \in (0,1),$ we have: 
	\begin{enumerate}
		\item The support of the recovered signal is contained in the true support, i.e. $\Supp({\bm x}^{(t)}) \subseteq \Supp (\bm x^*)$.
		\item The recovered error bound is
		\begin{equation}\label{eq:induction_hyp_error}
		\| {\bm x}^{(t)} - \bm x^* \|_{2,1} \leq s \zeta \exp (-c_1 t)+ c_2 \sigma,
		\end{equation}
		where $c_1>0$ and $c_2>0$ are constants that depend on $\bm{ \Phi }$ and $s$.
	\end{enumerate}
\end{theorem}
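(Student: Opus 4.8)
The plan is to argue by induction on the layer index $t$, carrying both conclusions simultaneously: the support inclusion $\Supp(\bm x^{(t)}) \subseteq \Supp(\bm x^*)$ and the $\ell_{2,1}$ error estimate \eqref{eq:induction_hyp_error}. The base case $t=0$ is immediate, since $\bm x^{(0)} = \bm 0$ has empty support and $\|\bm x^{(0)} - \bm x^*\|_{2,1} = \|\bm x^*\|_{2,1} \le s\zeta$ by Assumption 1. For the inductive step I would begin from the pre-threshold update \eqref{eq:AdaBlockLISTA-sub1}: substituting $\bm y = \sum_i \bm\Phi_i \bm x_i^* + \bm\varepsilon$ into the residual $\bm r^{(t)}$ and isolating the $i=q$ term gives, for each block $q$,
\begin{equation*}
\begin{aligned}
\bm z_q^{(t+1)} - \bm x_q^* ={}& \big(\bm I - \gamma^{(t)}\bm\Phi_q^H\bm W_q^H\bm\Phi_q\big)(\bm x_q^{(t)} - \bm x_q^*) \\
&{} + \gamma^{(t)}\bm\Phi_q^H\bm W_q^H\sum_{i \ne q}\bm\Phi_i(\bm x_i^* - \bm x_i^{(t)}) \\
&{} + \gamma^{(t)}\bm\Phi_q^H\bm W_q^H\bm\varepsilon.
\end{aligned}
\end{equation*}
Here I invoke the (implicit) column-normalization $\gamma^{(t)}\bm\phi_{p,q}^H\bm W_q^H\bm\phi_{p,q}=1$, so that $\bm I - \gamma^{(t)}\bm\Phi_q^H\bm W_q^H\bm\Phi_q$ is a \emph{hollow} matrix whose off-diagonal entries are precisely the quantities controlled by $\tilde\nu_I$ in \eqref{eq:g_sub_coherence}.

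Second, I would bound the three terms separately. The within-block term is handled by a Gershgorin/row-sum estimate: a $P\times P$ hollow matrix with entries at most $\tilde\nu_I$ has spectral norm at most $(P-1)\tilde\nu_I$, so this term contributes $(P-1)\tilde\nu_I\,\|\bm x_q^{(t)}-\bm x_q^*\|_2$. The cross-block term is bounded through \eqref{eq:g_block_coherence}, since $\|\gamma^{(t)}\bm\Phi_q^H\bm W_q^H\bm\Phi_i\|_s \le P\tilde\mu_B$ for $i\ne q$, yielding at most $P\tilde\mu_B\sum_{i\ne q}\|\bm x_i^*-\bm x_i^{(t)}\|_2$. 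The noise term is controlled by $C_W$ of \eqref{eq:C_W} together with $\|\bm\varepsilon\|_2<\sigma$ from Proposition~\ref{prop:noisebound}, giving $\|\gamma^{(t)}\bm\Phi_q^H\bm W_q^H\bm\varepsilon\|_2 \le C_W\sigma$. Crucially, the inductive support hypothesis forces $\bm x_i^{(t)}=\bm x_i^*=\bm 0$ off the support, so every block sum collapses onto the $s$ active blocks.

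Third, I would close both claims. For support recovery, take $q\notin\Supp(\bm x^*)$; then $\bm x_q^{(t)}=\bm 0$ eliminates the within-block term and the remaining two give $\|\bm z_q^{(t+1)}\|_2 \le P\tilde\mu_B\|\bm x^{(t)}-\bm x^*\|_{2,1}+C_W\sigma = \theta^{(t)}$ by \eqref{eq:theta}, so the block shrinkage \eqref{eq:AdaBlockLISTA-sub2} returns $\bm x_q^{(t+1)}=\bm 0$. For the error I use the elementary non-expansiveness estimate $\|\bm x_q^{(t+1)}-\bm x_q^*\|_2 \le \|\bm z_q^{(t+1)}-\bm x_q^*\|_2 + \theta^{(t)}$, which holds because the block soft-threshold satisfies $\|\bm z_q^{(t+1)}-\bm x_q^{(t+1)}\|_2 \le \theta^{(t)}$. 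Summing over the $s$ active blocks, collapsing the double block-sum, and substituting $\theta^{(t)}$ produces the one-step recursion
\begin{equation*}
\|\bm x^{(t+1)} - \bm x^*\|_{2,1} \le \rho\,\|\bm x^{(t)} - \bm x^*\|_{2,1} + 2sC_W\sigma,
\end{equation*}
with $\rho = (P-1)\tilde\nu_I + P\tilde\mu_B(2s-1)$. A short rearrangement shows the sparsity condition \eqref{eq:s} is \emph{exactly equivalent} to $\rho<1$; setting $c_1 = -\ln\rho>0$ and unrolling with $\|\bm x^{(0)}-\bm x^*\|_{2,1}\le s\zeta$ delivers \eqref{eq:induction_hyp_error} with $c_2 = 2sC_W/(1-\rho)$.

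The main obstacle I anticipate is not any single inequality but the bookkeeping that makes the two claims mutually consistent: the threshold $\theta^{(t)}$ must be large enough to annihilate every off-support block, yet the induced $s\theta^{(t)}$ penalty, once folded back into the error recursion, must still leave a contraction factor below one. This is exactly why $\rho$ packages the within-block coherence $(P-1)\tilde\nu_I$ together with the $(2s-1)$-fold cross-block coherence $P\tilde\mu_B$, and why \eqref{eq:s} emerges as the sharp threshold. The two remaining technical points that require care are justifying the column-normalization underlying the hollow-matrix decomposition, and pinning down the precise operator-norm interpretation of the $\ell_{2,1}$ quantity $C_W$ used in the noise bound.
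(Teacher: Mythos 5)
Your proposal is correct and follows essentially the same route as the paper's appendix: the same three-term decomposition of $\bm z_q^{(t+1)}-\bm x_q^*$ (hollow within-block matrix bounded in spectral norm by $(P-1)\tilde{\nu}_I$, cross-block terms bounded by $P\tilde{\mu}_B$, noise bounded by $C_W\sigma$), the same induction establishing no false positives, the same shrinkage displacement bound (this is exactly the paper's Lemma~\ref{thm:lemma1}, which you re-derive more compactly via $\|\bm z - \bm x\|_2 \le \theta$), and the identical contraction factor $\rho = (P-1)\tilde{\nu}_I + P\tilde{\mu}_B(2s-1)$ with $c_1 = -\log \rho$ and $c_2 = 2sC_W/(1-\rho)$. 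The only bookkeeping nuance is that, because $\theta^{(t)}$ is defined through $\sup_{\bm x^*}\|\bm x^{(t)}-\bm x^*\|_{2,1}$, the one-step recursion must be closed in that supremum over $\mathcal{X}(\zeta,s)$ (as the paper does before unrolling) rather than for a fixed $\bm x^*$; and the column normalization $\gamma^{(t)}\bm \phi_{p,q}^H \bm W_q^H \bm \phi_{p,q}=1$ that you rightly flag is likewise assumed without proof in the paper, which simply asserts that the diagonal of $({\bm A}_q^{(t)})^H{\bm \Phi}_q$ equals one.
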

The proof of Theorem \ref{thm:convergence1} mimics the corresponding proof steps in \cite{liu2018alista, 2020Ada, Zarka2020Deep} and can be viewed as an extension to the block-sparse case, which is detailed in the Appendix.

The above theorem can be interpreted in two aspects:
\begin{itemize}
	\item
	The conclusion of Theorem \ref{thm:convergence1} gives an upper bound of the recovered error at the $t$-the layer of Ada-BlockLISTA, which reduces to $	\| {\bm x}^{(t)} - \bm x^* \|_{2,1} \leq s \zeta \exp (-c_1 t)$ in noiseless case. Thus, Theorem \ref{thm:convergence1} shows that under sufficient conditions \eqref{eq:s} and \eqref{eq:theta} Ada-BlockLISTA converges at a $ \mathcal{O}(\log(\frac{1}{\epsilon}))$ rate, which is faster than original ISTA of $\mathcal{O}(\frac{1}{\epsilon})$ and FISTA of $\mathcal{O}(\frac{1}{\sqrt{\epsilon}})$.
	\item
	Compared to the recovery condition of \ac{adalista} \cite[Theorem 1]{2020Ada}, the recovery condition \eqref{eq:s}, which is the same with the block-sparse recovery condition of \cite[Theorem 3]{Block-Yonina}, show that exploring the block structure in $\bm x$ leads to ensure recovery of $\bm x$ with higher sparsity level.
	Note that \eqref{eq:s} is a sufficient condition to guarantee successful recovery, 
	which may not be necessary for practice.
\end{itemize}
Table~\ref{table1} compares the convergence rate and recovery condition of \ac{adalista} and Ada-BlockLISTA networks as well as corresponding traditional iterative algorithms.

\begin{table} 
	
	\begin{tabu} to 1\columnwidth{X[3,c]|X[4,c]|X[5,c] |X[7,c]} 
		\hline 
		ISTA & \ac{adalista} & Block-ISTA & \textbf{Ada-BlockLISTA} \\ 
		\hline 
		$\mathcal{O}(1/{\epsilon})$ &
		$\mathcal{O}(\log(1/{\epsilon}))$ &
		$\mathcal{O}(1/{\epsilon})$ &
		$\mathcal{O}(\log(1/{\epsilon}))$ \\ 
		\hline 
		\multicolumn{2}{c|} {$s < \frac{1}{2}\left( { \overline{\mu} } ^{ - 1} + 1 \right)$} & 
		\multicolumn{2}{c} {$s < \frac{1}{{2P}}\left( { \overline{\mu} _ B} ^{ - 1} + P - (P - 1)\frac{{ \overline{\nu} _ I}}{{ \overline{\mu} _ B}} \right)$} \\ 
		\hline 
	\end{tabu} 
	\caption{The convergence rate (first row) and recovery condition (second row) of ISTA, Block-ISTA, \ac{adalista} and Ada-BlockLISTA.
		Here we define $\overline{\mu} = \mu(\bm \Phi, \bm \Phi) $ (for ISTA) or $\overline{\mu} = \mu(\bm W_2 \bm \Phi, \bm \Phi) $ (for \ac{adalista}). 
		Correspondingly, for block sparse recovery, the sub-coherence $\overline{\nu} _ I$ is ${\nu} _ I$ in \eqref{eq:sub_coherence} or $\tilde{\nu} _ I$ in \eqref{eq:g_sub_coherence} and the block coherence $\overline{\mu} _ B$ is ${\mu} _ B$ in \eqref{eq:block_coherence} (for Block-ISTA) or $\tilde{\mu} _ B$ in \eqref{eq:g_block_coherence} (for Ada-BlockLISTA).
	} \label{table1}
\end{table} 

\section{Numerical Results}
\label{sec:Numerical}

Here, we consider an application of the block-sparse signal model known as range-Doppler estimation of extended targets, which is similar to \ac{2d} harmonic retrieval. 
We apply each of the algorithms previously defined to this application.

\subsection{Signal model of range-Doppler estimation problem} \label{subsec:2dMHR}
Frequency agile radars are very attractive for tasks under complex electromagnetic environments \cite{ExtendedTarget, 4137843,4338057,6212202}. 
Following the signal model presented in \cite{ExtendedTarget}, we first derive the echo expressions of frequency agile radars, which have a high synthetic range resolution.
Next, we show that range-Doppler estimation of extended targets can be formed into a block-sparse recovery problem.

The radar uses $N$ pulses whose carrier frequency is computed by $f_n = {f_0} + {C_n} {\Delta f}$, $n = 0,1,2,\cdots, N-1$, 
where $f_0$ is the initial carrier frequency, and $\Delta f$ is the frequency step interval. 
The randomized modulation code $C_n$ is randomly selected from an integer set \{$0, 1, 2, \cdots, P-1$\}, where $P$ is the number of frequency points. 
The $n$-th transmitted pulse, $s(n,t)$, can be expressed as
\begin{equation}
s(n,t) = \text{rect}\left(\frac{t - nT_r}{T_p} \right) \exp\{j2\pi {f_n}\left( {t - n{T_r}} \right)\},
\end{equation}
where $T_r$ is the pulse repetition interval (PRI), 
$ T_p $ is the pulse width and $\text{rect}(\cdot)$ is the rectangular function defined as
\begin{equation}
\text{rect}(t) = \left\lbrace 
\begin{array}{ll}
1 & 0 \leq t \leq 1,\\ 
0 & \text{otherwise}. 
\end{array}\right.
\end{equation}
We first consider the received signal of a single ideal scatterer with complex scattering coefficient, $\beta$.
Assuming that the scatterer's range is $R$ and velocity is $v$, its time delay is $ {\tau }(t) = \frac{2({R} + v t)}{c}$, where $c$ is the speed of light. 
Based on the “stop-and-go” assumption \cite{Richardsbook}, the $n$-th echo-signal $s_{\rm r}(n,t)$ from the scatterer is written as 
\begin{equation}
{s_{\rm r}}(n,t) = \beta 
\text{rect}\left(\frac{ t \!\!- \!\!n{T_r} \!\!-\!\! {\tau }(n{T_r}) }{ T_p } \right)
\exp\{ j2\pi {f_n}\left( {t \!\! - \!\!n{T_r} \!\!-\!\! {\tau}(n{T_r})} \right)\}.
\end{equation}
Then, the echo of each pulse is down-converted by its corresponding carrier frequency. 
The baseband echo, ${\widetilde s_{\rm r}}(n,t)$, becomes
\begin{align}
{\widetilde s_{\rm r}}(n,t)
&= {s_{\rm r}}(n,t) \cdot e^{ - j2\pi f_n \left(t - n{T_r}\right)} \notag\\
&= \beta 
\text{rect}\left(\frac{t \!\!- \!\!n{T_r} \!\!-\!\! {\tau }(n{T_r})}{T_p} \right)
\exp\{  - j2\pi {f_n} {\tau}(n{T_r}) \}. 
\end{align}

Then, the baseband echo signal ${\widetilde s_{\rm r}}(n,t)$ is sampled at time instant $t = n{T_r}+{t_s}$, where $t_s = l_s/f_s$ and $l_s=0,1,…,\lfloor T_r f_s \rfloor$.
Every sample time instant corresponds to a coarse range cell (CRC) \cite{9205659}.
Assuming that the scatterer does not move between CRCs during a coherent processing interval (CPI), i.e., there exists an integer $l'_s$ such that $ (l'_s-1)/f_s \le 2R/c \le l'_s/f_s$, the echo sequence $y(n)$ of the scatterer sampled at $t = n{T_r}+{l'_s/f_s}$ is defined as (Please refer to \cite{ExtendedTarget} for detailed assumptions and derivations)
\begin{align}
y(n) 
&= {\widetilde s_{\rm r}}(n,t)|_{t = n{T_r} + l'_s/f_s} \notag \\
&= \beta \exp\{ - j2\pi {f_n}{\tau}(n{T_r})\} \notag \\
&= \beta \exp\left\{ - j\frac{{4\pi }}{c}({f_0} + {C_n}\Delta f)\left( R + v n{T_r} \right)\right\}.
\label{eq:sampled}
\end{align}

\if false
Then, the echo sequence can be represented as 
\begin{equation}\label{eq:rsfr}
y(n) = \sum\limits_{k = 1}^K {{\widetilde \beta _k}\exp\left\{j\frac{{4\pi }}{c}({f_0} + {C_n}df)\left( {{R_k} + n{T_r} {v_k}} \right)\right\} } .
\end{equation}
\fi

Due to the high synthetic range resolution of the frequency agile radar, radar targets could span multiple high range resolution (HRR) cells and are referred to as \emph{extended targets} \cite{ExtendedTarget}. We assume that the $k$-th target is extended in range, 
which consists of $P_k$ scatterers with the same velocity $v_k$, corresponding to $P_k$ HRR cells. 
When there are $K$ targets, we rewrite \eqref{eq:sampled} as 
\begin{equation}\label{eq:rsfr}
\!\!\! y(n) = \sum _{k=1}^{K} \sum _{i=1}^{P_k}
{ \beta _{k,i} \exp\left\{ - j\frac{{4\pi }}{c}({f_0} + {C_n} \Delta f)\left( { R_{k,i} + {v_k} n{T_r} } \right)\right\} },
\end{equation}
where $\beta _{k,i}$ and  $R_{k,i}$ are the scattering coefficient and range of $i$-th scatterer in the $k$-th target, respectively.

To further build a block-sparse recovery model, we discretize the range and velocity space into $P$ and $Q$ grids respectively, yielding $P \times Q$ discrete grid points, i.e., $\{R_p\}_{p \in \mathcal{P}} \times \{ v_q\}_{q \in \mathcal{Q}}$. 
As the maximum unambiguous range is $\frac{c}{2 \Delta f }$, we set the range grids as $R_p = \frac{c}{2 \Delta f } \frac{p}{P}$, $p \in \mathcal{P}$.
These $P$ range grids of the same velocity grid ($v_q$) form a velocity block, representing the HRR profile of a certain extended target moving at $v_q$, where $v_q = \frac{2c}{f_0 T_r} \frac{q}{Q}$, $q \in \mathcal{Q}$. 
Assuming that each scatterer's range and velocity lies on the prescribed grid points, 
we use $\bm{x}^*\in \mathbb{C}^{P Q} $ as the block-sparse signal to recover, which represents scatterers of targets, given by
\begin{equation}
\label{eq:blockx_radar}
\bm{x}^* \!=\! [
{\bm x}_1^T \ 
{\bm x}_2^T \ 
\cdots \ 
{\bm x}_Q^T
]^T,
\end{equation}
where the $q$-th velocity block is ${\bm x}_q = [
{\Gamma}_{q,1} \ 
{\Gamma}_{q,2} \ 
\cdots \ 
{\Gamma}_{q,P}
]^T \in \mathbb{C}^{P} $ and its elements correspond to the scattering coefficients of the extended target, given by
\begin{align*} 
{\Gamma}_{q,p}:= 
\left\lbrace 
\begin{array}{cc} 
\beta _{k,i} & \text{if }\, 
\exists (k, i),
\left( R_{k,i}, v_k\right)
=\left(R_{p}, v_q\right)\\ 
0 & \text{otherwise}. 
\end{array}\right. 
\tag{10} 
\end{align*}
Here, $ \bm x^*  $ is partitioned into $Q$ velocity blocks and each block has $P$ elements with the same velocity.
If there are $K$ targets, $ \bm x^*$ has $K$ non-zero blocks. The block corresponding to velocity $v_k$ contains $P_k$ non-zero elements, according to \eqref{eq:rsfr}.
Due to the natural block structure,  $ \bm x^*$ can be modeled as a block-sparse vector when there are only a few non-zero blocks, i.e., $K \ll Q$.

Accordingly, the measurement matrix can be written as
\begin{equation}
\bm \Phi  = \left[
\bm \Phi \left( {{v_1}} \right), 
\bm \Phi \left( {{v_2}} \right), \cdots , 
\bm \Phi \left( {{v_Q}} \right) 
\right],
\end{equation}
where the $k$-th velocity block $\bm \Phi \left( {{v_k}} \right)$ is a sub-matrix of size $N \times P$, whose $n$-th element of the $i$-th column is
${\phi _n}\left(  { R_{k,i},{v_k} } \right) = \exp\{  -  j\frac{{4\pi }}{c}({f_0} + {C_n}\Delta f)\left( { R_{k,i} + v_k n{T_r} } \right)\} $, $n = 0,1,2,\cdots,N-1$ and $i = 1,2,\cdots,P$.

Thus, we arrive at the block-sparse observation model,
\begin{equation} 
\bm y = \bm \Phi \bm x^* + \sigma_w {\bm w},
\end{equation}
where $\bm w \in \mathbb{C}^N$ is defined as a complex standard normal random vector and $\sigma_w > 0$ is the standard deviation of the additive noise.
When the measurement matrix is viewed as a row-sampled 2D \ac{dft} matrix, the problem can be viewed as a compressive \ac{2d} harmonic retrieval problem \cite{2foldblockToep}. 

\if false
of a Kronecker product, denoted by $\otimes$, of two discrete Fourier matrices, given by

\begin{equation} \label{eq:fulldict}
\begin{aligned}
\bm \Psi = 
\bm F_{P} \otimes \bm F_{Q}  \in \mathbb{C}^{M \times M}.
\end{aligned}
\end{equation}
The $(i, j)$-th entries of  $ \bm F_{P}$ and $\bm F_{Q}$ are defined as
$[\bm F_{P}]_{i,j}  = e^{\mathrm{j} 2 \pi \frac{ij}{P} }, i,j\in \mathcal{P}$,
and $[\bm F_{Q}]_{i,j}  = e^{\mathrm{j} 2 \pi \frac{ij}{Q} }, i,j\in \mathcal{Q}$, respectively.

The full observation vector $\bm y^{full} \in \mathbb{C}^{PQ} $ can be expressed as \begin{eqnarray} \label{eq:MD_dict}
\bm y^{full}  = \bm \Psi \bm x^*.
\end{eqnarray}

Furthermore, we consider a compressive measurement, where only $N$ entries of $\bm y^{full}$ are observed with $N \ll M$. 
To store the indices of selected entries from $\bm y^{full}$, we define a subset $\Omega$ of cardinality $N$ randomly chosen from the set $\mathcal{M}$.
Then we use a row-subsampled matrix $\bm R$ to selects $N$ rows of $\bm \Psi$ corresponding to the elements in $\Omega$, i.e., $\left[ \bm R \right]_{n,m} = 1$, where $m$ is the $n$-th element of $\Omega$ while other entries in the $n$-th row being zeros. 
Thus, the sub-sampled observations with AWGN $\bm \varepsilon$ are denoted by $\bm y \in \mathbb{C}^{N}$, i.e.,

\begin{equation}
\bm y = \bm R \bm y^{full}  + \bm \varepsilon
= \bm R \bm \Psi \bm x^*  + \bm \varepsilon,
\label{eq:partial_model}.
\end{equation}
Here, we use $ {\bm \Phi} = \bm R \bm \Psi$ to represent the new dictionary matrix, consisting of sub-sampled $N$ rows of the full dictionary $\bm \Psi$.
\fi
We aim to accurately recover all the positions of non-zero blocks in $\bm x$ by leveraging its block-sparse structure.
Experimental analysis of different block-sparse recovery methods and their non-block counterparts are provided in Subsection~\ref{ssec:results}.

\subsection{Simulation Results}
\label{ssec:results}

We simulate various experiments with different signal dimensions, as well as the number of non-zero blocks, in both the noiseless and noisy cases. 
In both cases, we compare the convergence and recovery accuracy of four methods (ISTA, Block-ISTA, \ac{adalista} and our Ada-BlockLISTA).

\subsubsection{Noiseless block sparse recovery}
\label{ssec:Noiseless}

In our simulation, the range and velocity frequencies are divided into $P = 16$ and $Q = 64$ grids, respectively. We generate observed signals according to the \eqref{eq:rsfr}, where the number of non-zero blocks in $\bm x$ is $s$.

We first show the improvement of convergence speed of deep unfolded networks, i.e., \ac{adalista} and our Ada-BlockLISTA).
With block sparsity $K=1$, the recovered NMSE results for each iteration (ISTA and Block-ISTA) or layer (\ac{adalista} and Ada-BlockLISTA) are shown in Fig.~\ref{fig:nmse_periter_noiseless}. 
Compared to classical ISTA and Block-ISTA which take nearly hundreds of iterations to converge, \ac{adalista} and Ada-BlockLISTA networks improve the convergence speed by one order of magnitude or more, which takes only about $5$ layers.

\begin{figure} [tb]
	\centering
	\includegraphics[width=0.4\columnwidth]{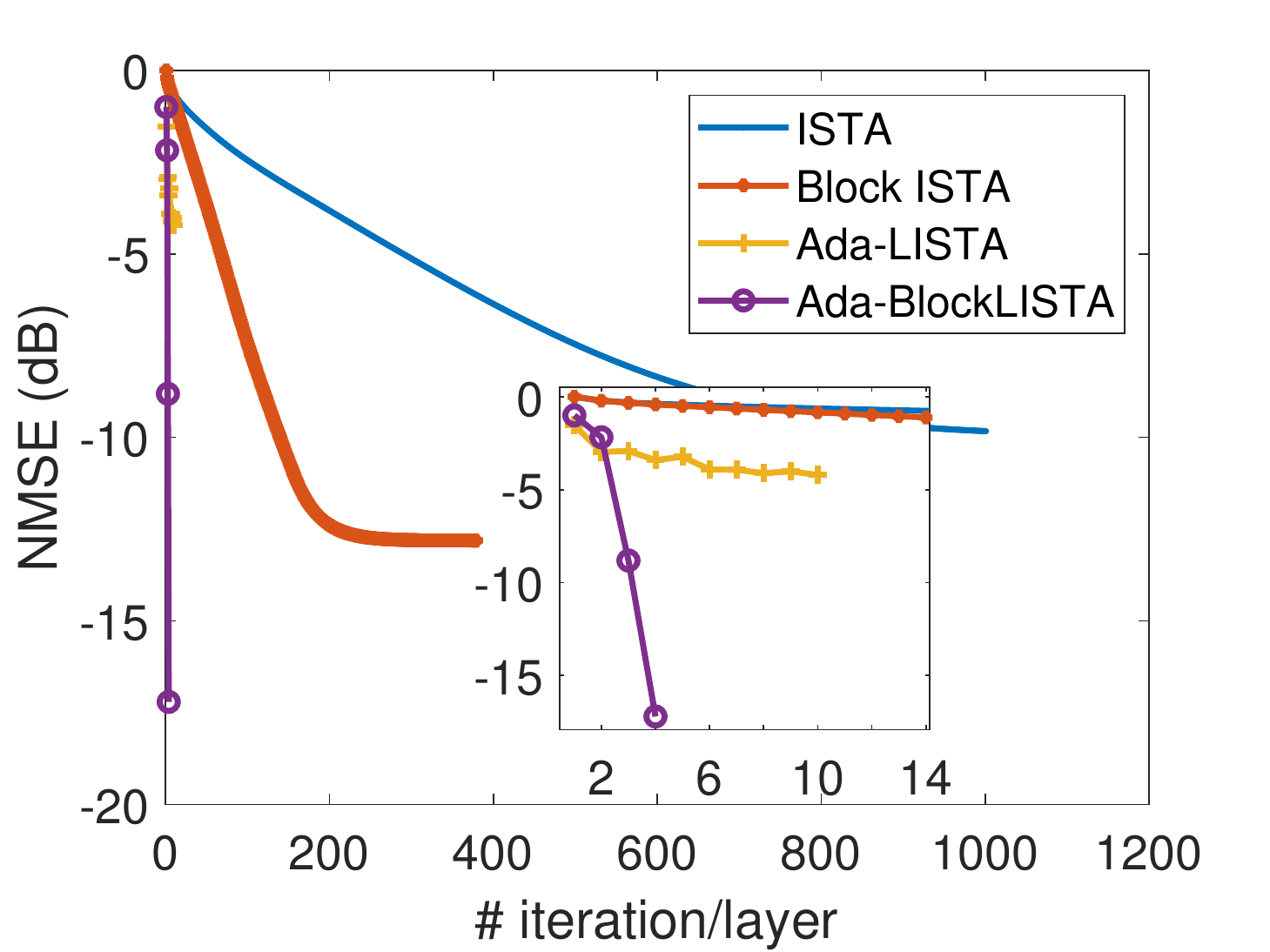}
	\caption{The NMSE of four methods in each iteration/layer without noise.}
	\label{fig:nmse_periter_noiseless}
\end{figure}

Furthermore, we also compare the recovery performance of these four methods with different block sparsity.
When there is only one none-zero block in $\bm x$, i.e., block sparsity $K=1$, the recovered block sparse signal  $\bm x$ of different methods are plotted in Fig.~\ref{fig:noiseless_oneblock}. 

\begin{figure}
	\centering 
	\subfigure[]{ 
		\includegraphics[width=0.2\columnwidth]{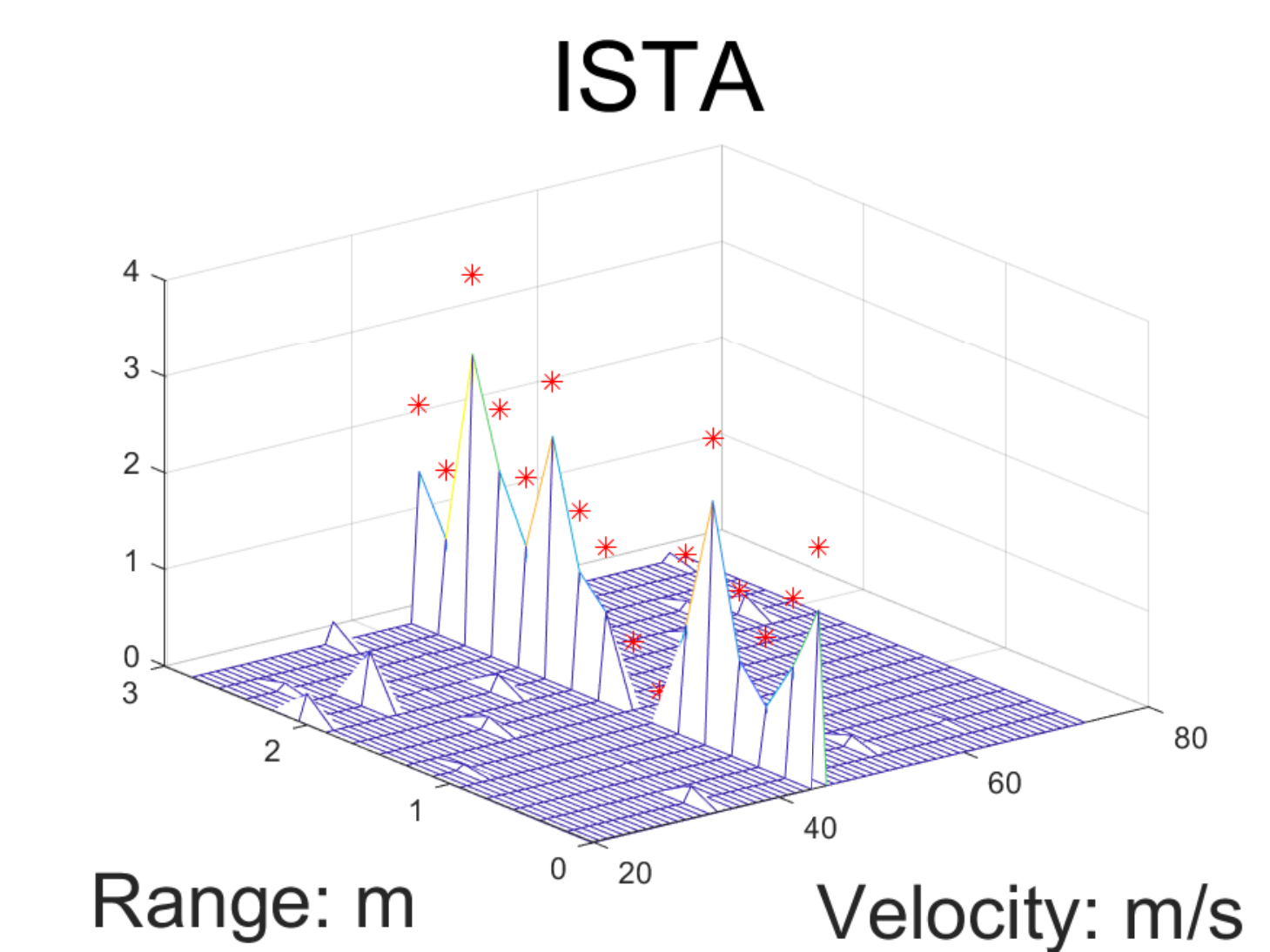}
	} 
	\subfigure[]{
		\includegraphics[width=0.2\columnwidth]{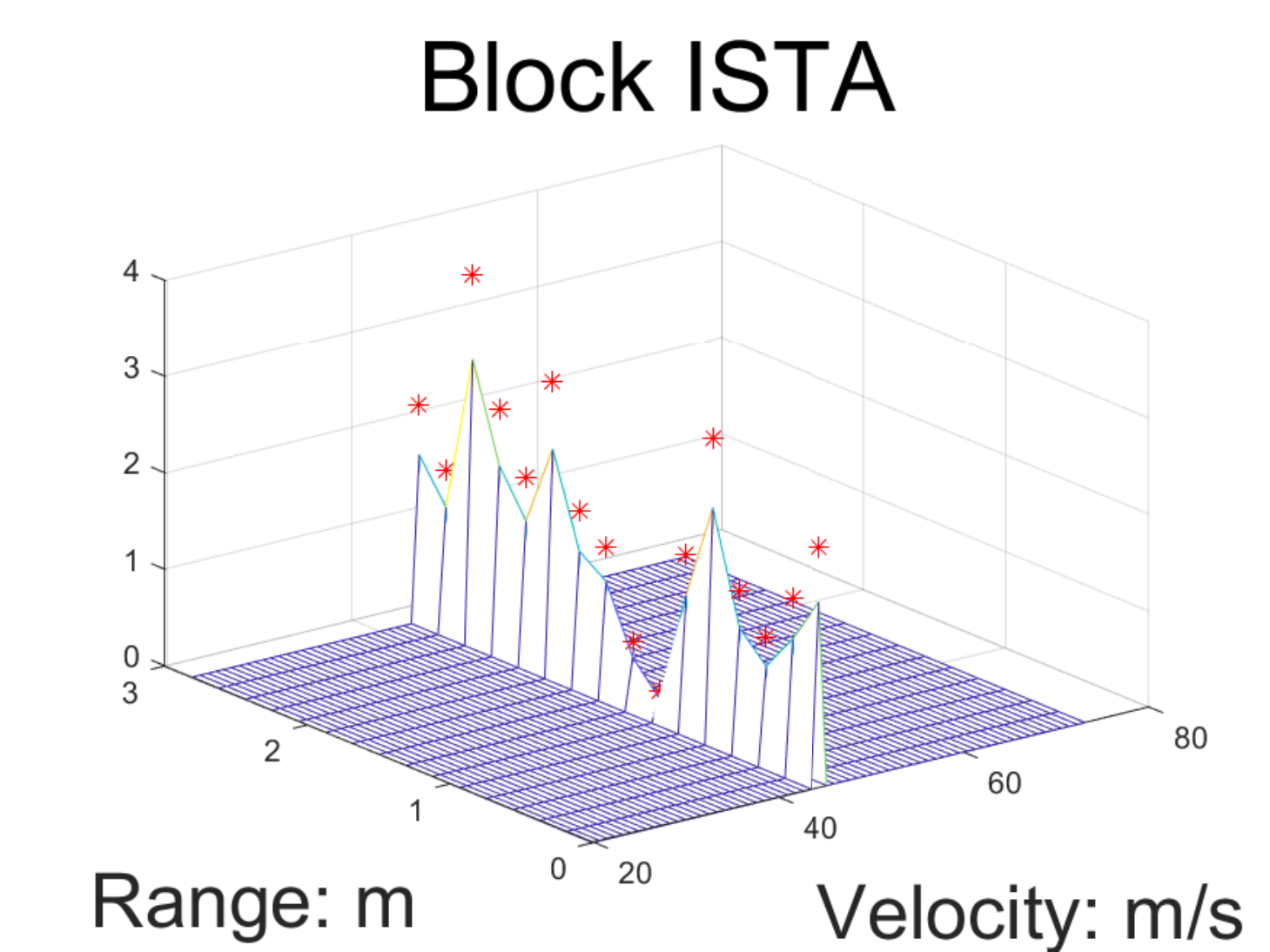}
	} 
	\subfigure[]{
		\includegraphics[width=0.2\columnwidth]{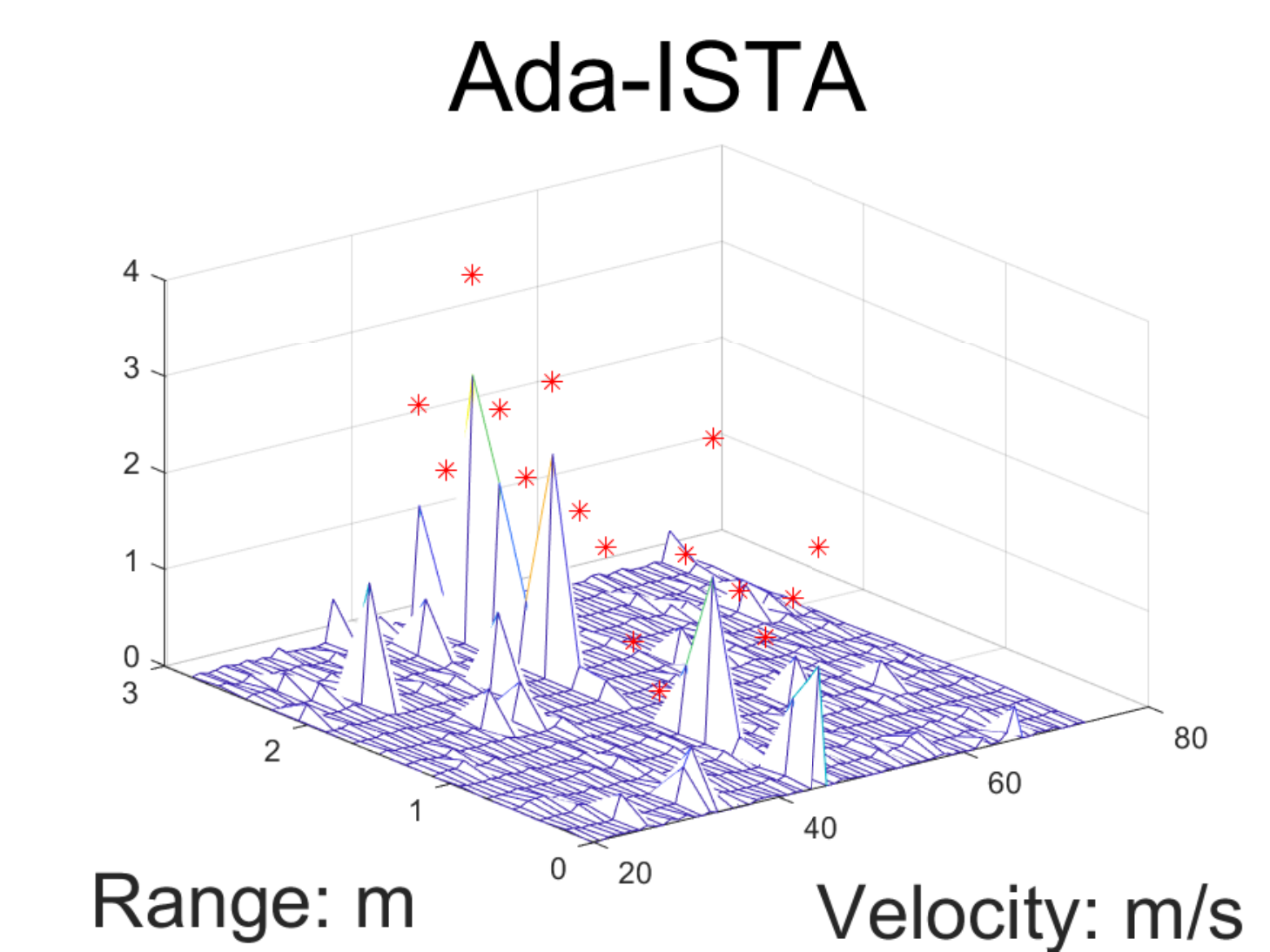}
	} 
	\subfigure[]{
		\includegraphics[width=0.2\columnwidth]{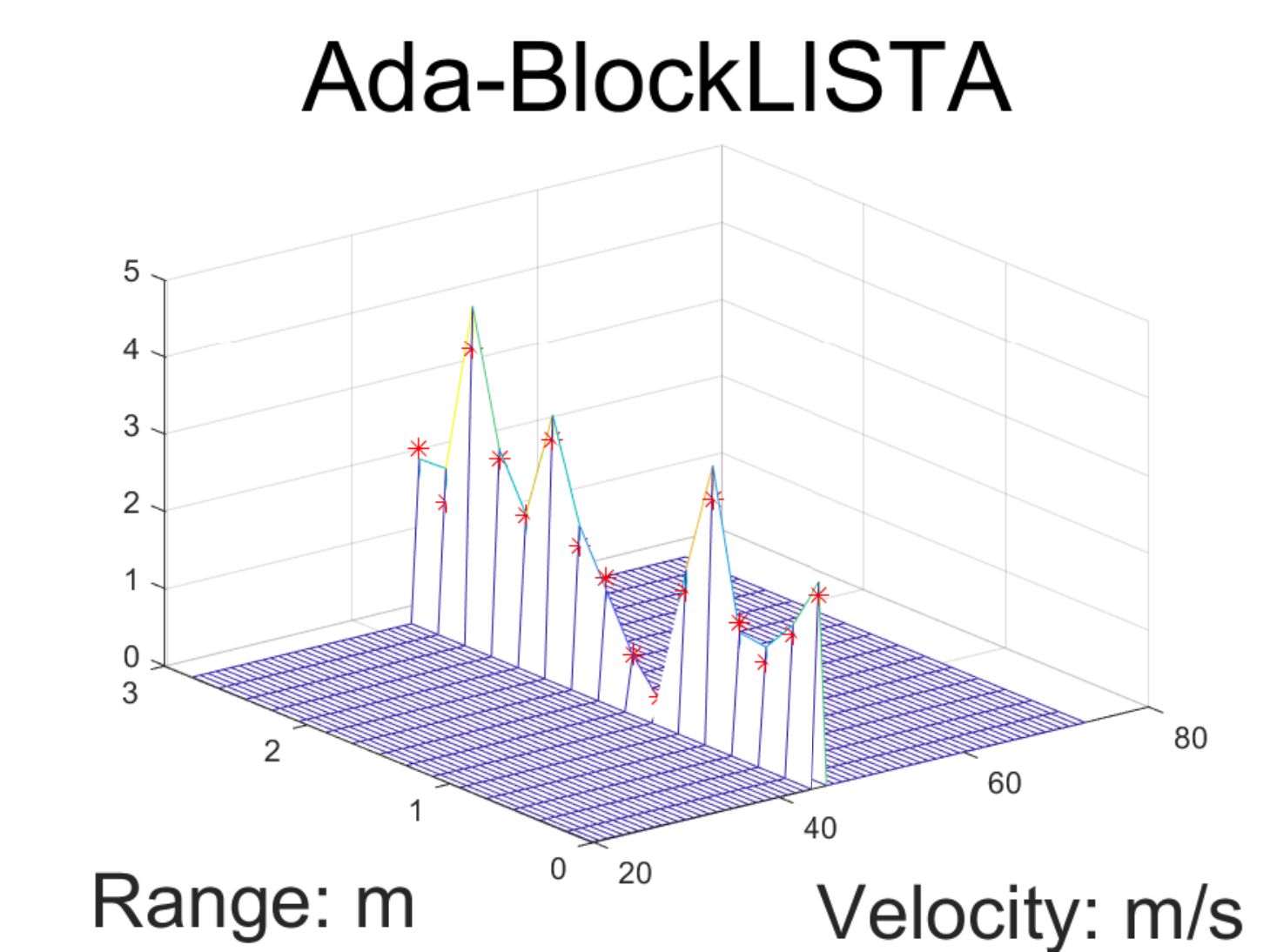}
	} 
	\caption{The reconstructed block sparse signal using (a) ISTA, (b) Block ISTA, (c) Ada-LISTA and (d) Ada-BlockLISTA with only one non-zero block.}
	\label{fig:noiseless_oneblock}
\end{figure}

Here we can see that the recovered plane of ISTA and \ac{adalista} both have a side-lobe pedestal, although there is no noise in observation. On the other hand, block sparse recovery methods such as Block ISTA and Ada-BlockLISTA have good recovery results.

A higher block sparsity $K$ is also simulated to further indicate the robustness of block sparse recovery methods.
The following Fig.~\ref{fig:noiseless_twoblock} shows the recovered block sparse signals With block sparsity $K=2$.
The results show that Block ISTA and Ada-BlockLISTA successfully reconstruct two blocks, while ISTA and \ac{adalista} fail.
It demonstrates that block sparse recovery methods can recover a larger number of blocks than their non-block counterparts, which is also proved in Theorem \ref{thm:convergence1}.

\begin{figure}
	\centering 
	\subfigure[]{ 
		\includegraphics[width=0.2\columnwidth]{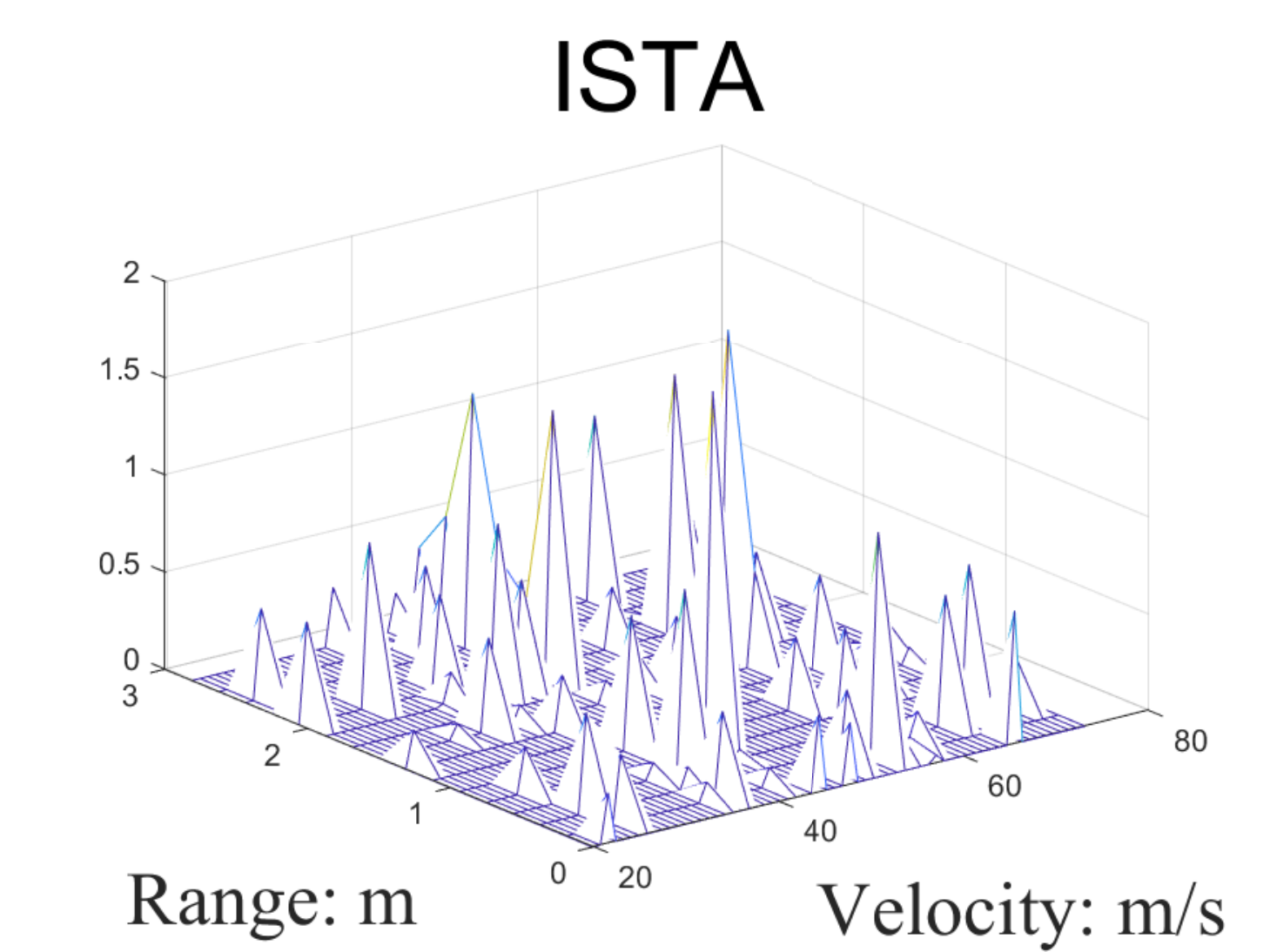}
	} 
	\subfigure[]{
		\includegraphics[width=0.2\columnwidth]{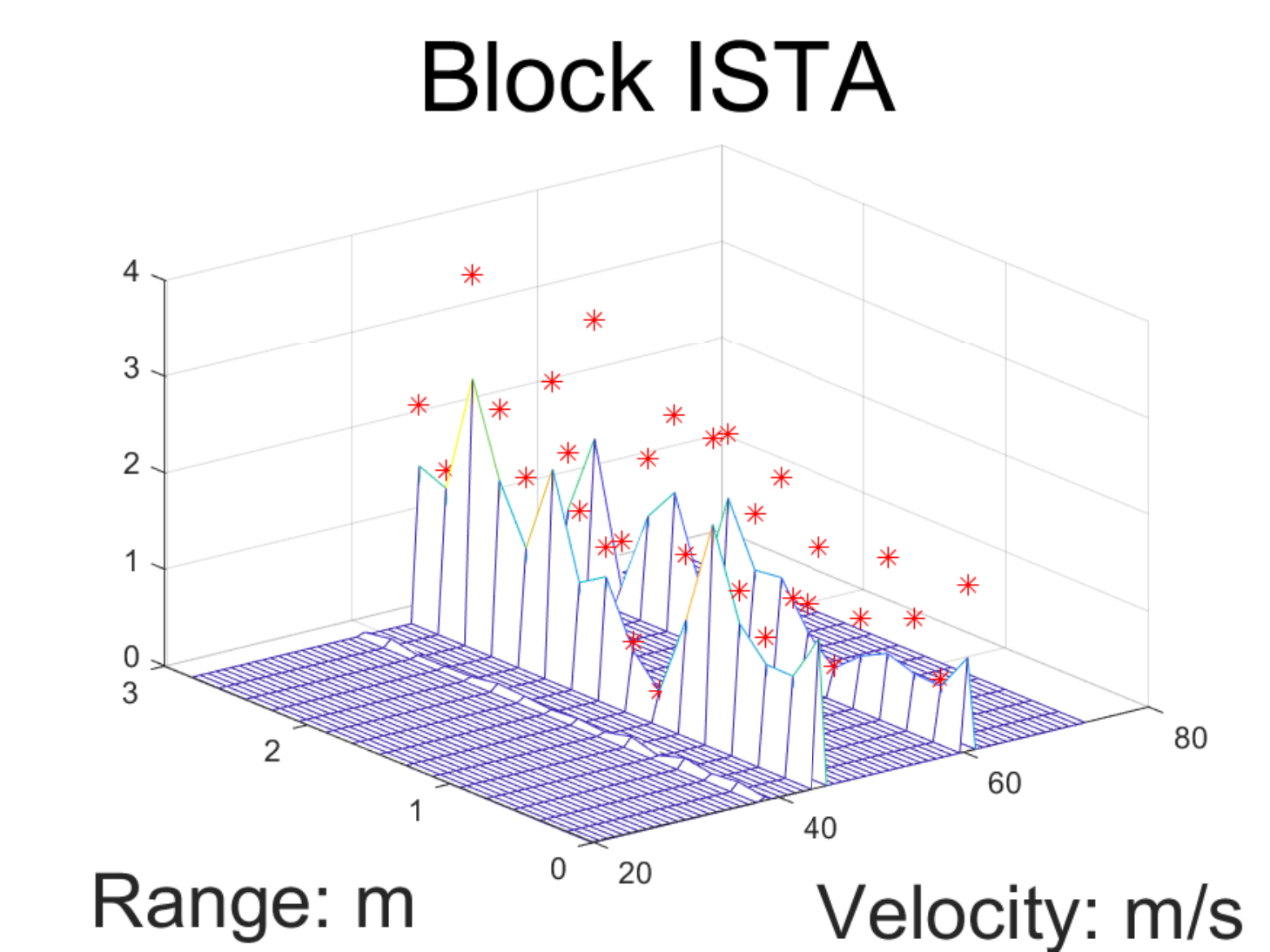}
	} 
	\subfigure[]{
		\includegraphics[width=0.2\columnwidth]{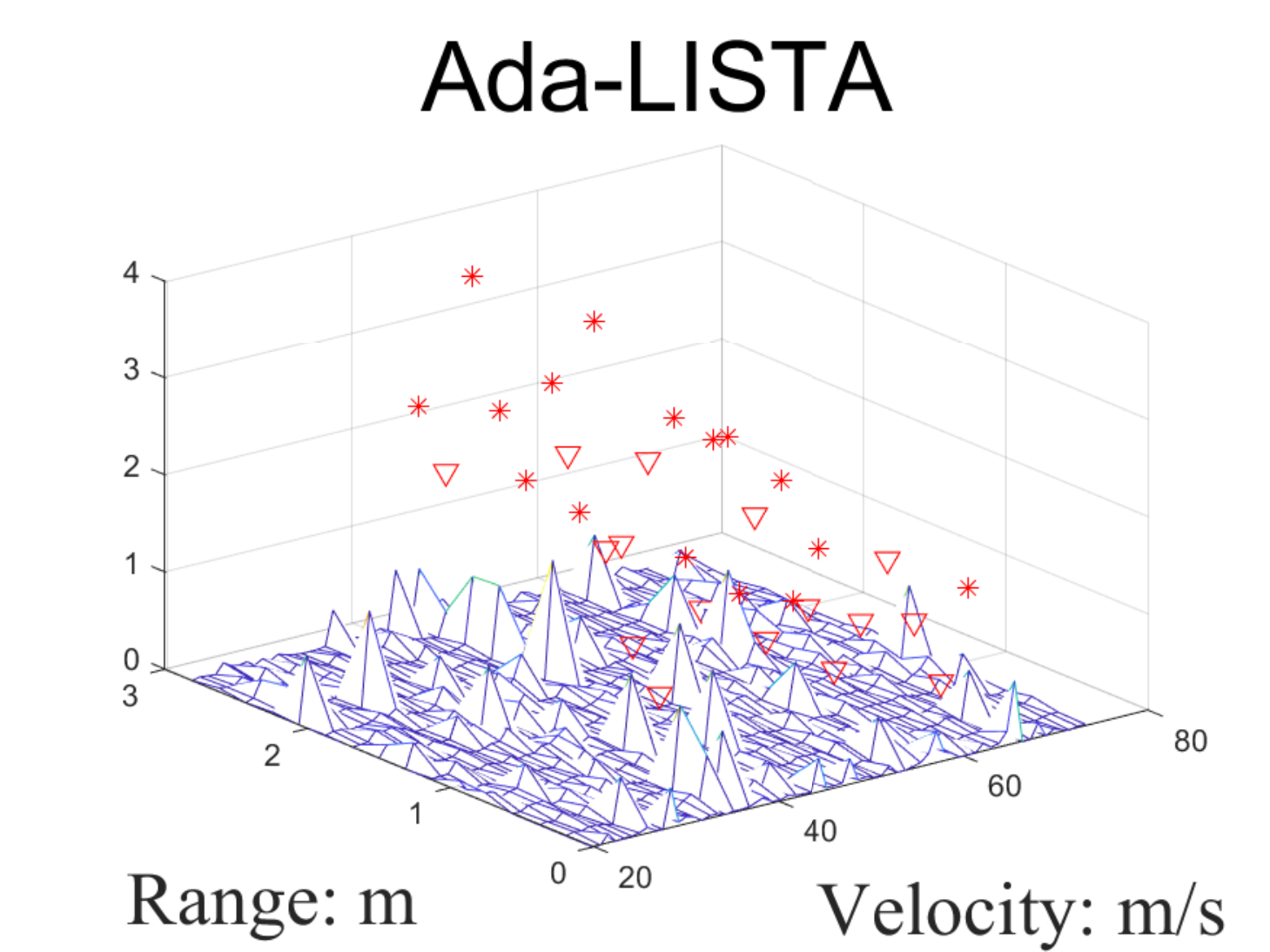}
	} 
	\subfigure[]{
		\includegraphics[width=0.2\columnwidth]{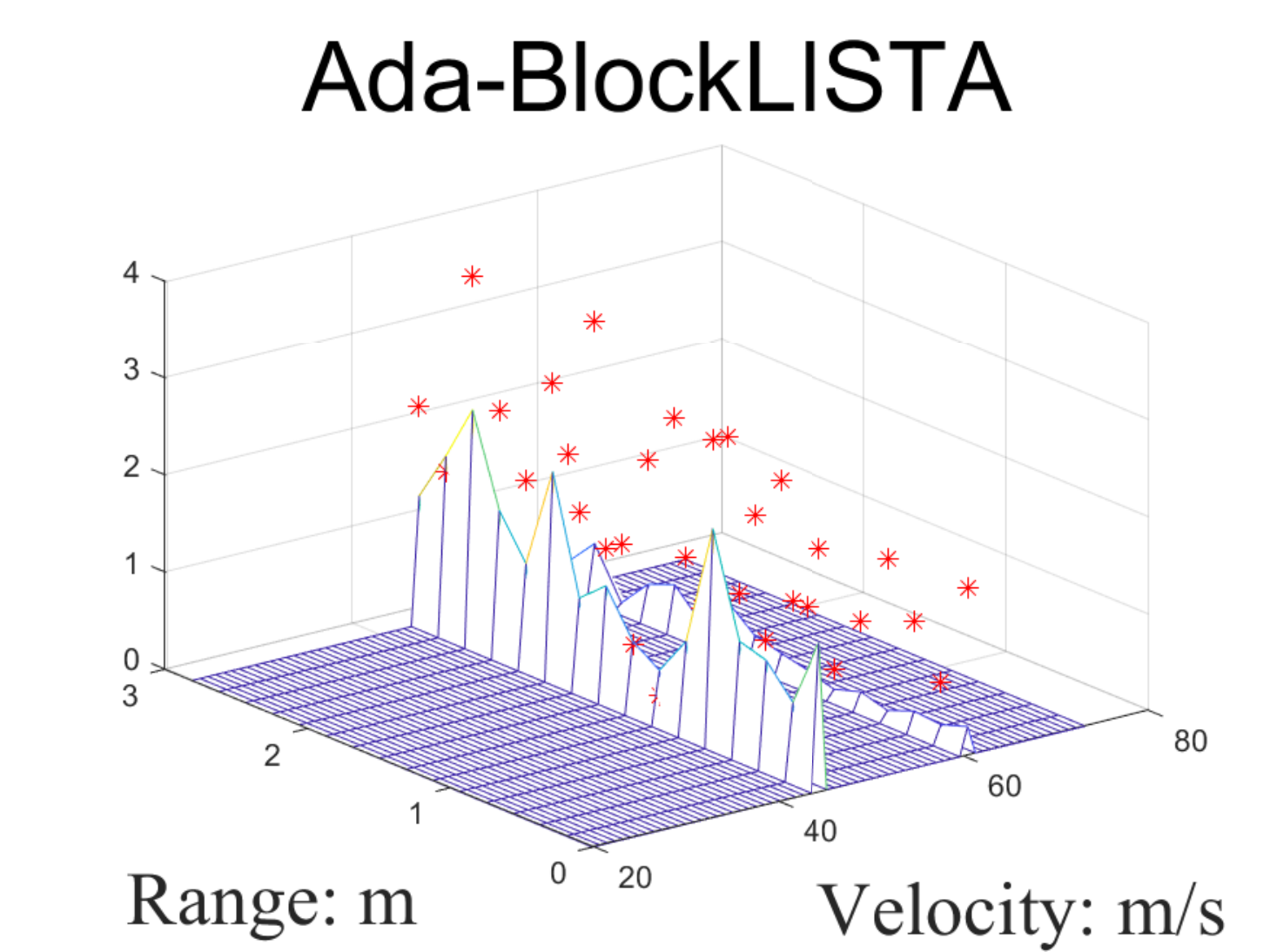}
	} 
	\caption{The reconstructed block sparse signal, which has two non-zero blocks, using (a) ISTA, (b) Block ISTA, (c) Ada-LISTA and (d) Ada-BlockLISTA.}
	\label{fig:noiseless_twoblock}
\end{figure}

\subsubsection{Noisy block sparse recovery}
\label{ssec:Noisy}

In this subsection, we analyze the simulation results of four methods (ISTA, Block-ISTA, \ac{adalista} and our Ada-BlockLISTA) with noise.
Here, we let $P = 4$ and $Q = 64$.

Fig.~\ref{fig:nmse_periter_noisy} shows that \ac{adalista} and Ada-BlockLISTA both enjoy accelerated convergence speed over their respective traditional iterative algorithm. 
While Block-ISTA improves upon ISTA with higher reconstruction quality, our Ada-BlockLISTA network shows both rapid convergence speed and great recovery performance.

\begin{figure} [tb]
	\centering
	\includegraphics[width=0.4\columnwidth]{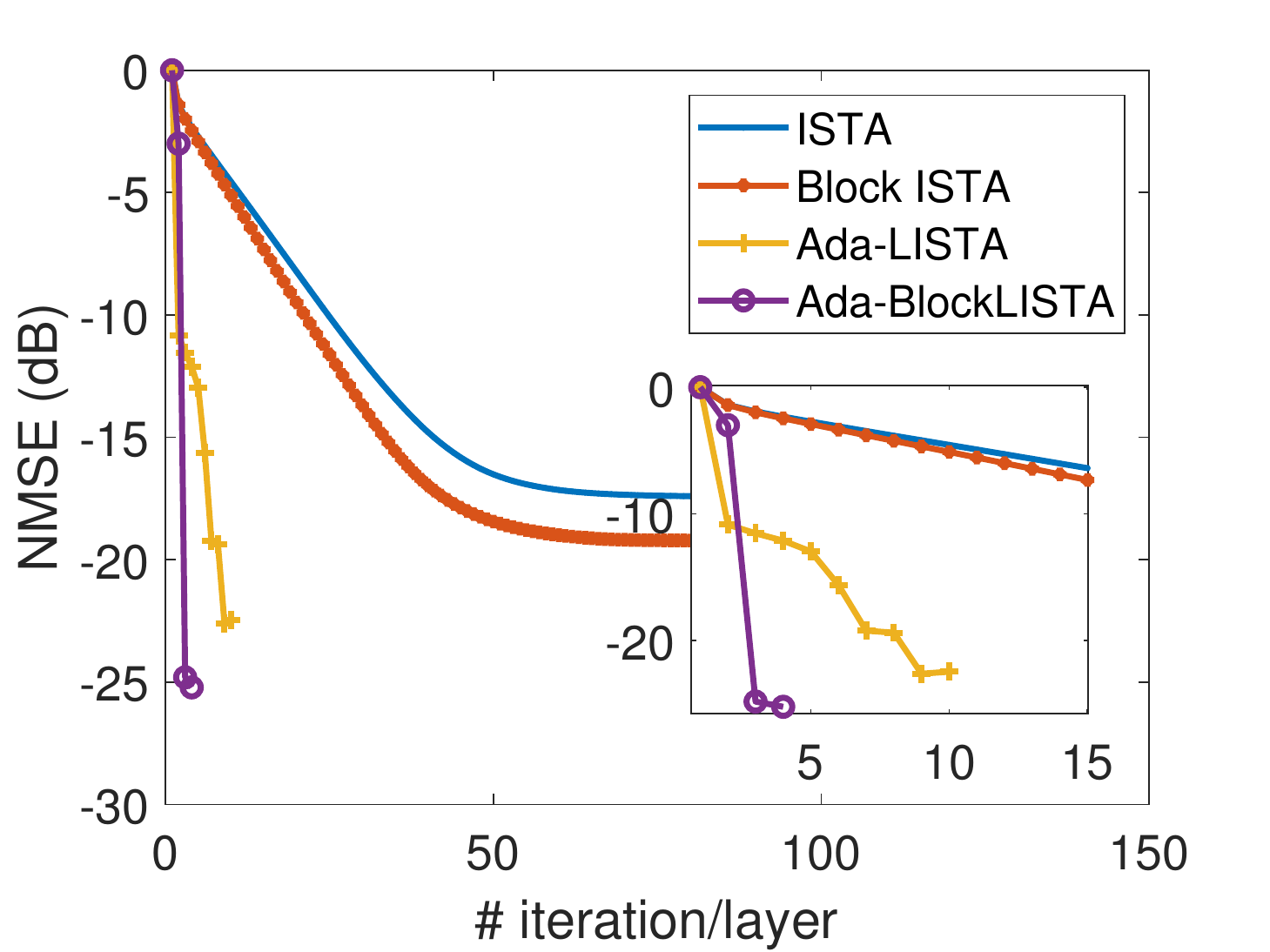}
	\caption{The NMSE of four methods in each iteration/layer with noise.}
	\label{fig:nmse_periter_noisy}
\end{figure}

We also evaluate the support recovery performance of block-sparse signals in terms of hit rate versus \ac{snr} and block sparsity in Fig.~\ref{fig:hit rate}.
The \ac{snr} is computed as $\mathrm{SNR} = 10\log _{10} \frac{1}{
	\sigma_w^2 }$ , where $\sigma_w^2$ is the noise variance.
The hit rate is defined as the percentage of successes in finding the nonzero entries in $\bm x$ against noise.
In Fig.~\ref{fig:hit rate}, a larger area of the dark color part represents better block-sparse recovery performance.
While increasing the number of blocks, we find that block-sparse recovery methods (Block-ISTA and Ada-BlockLISTA) have better recovery performance than their non-block counterparts (ISTA and Ada-LISTA). 
The phenomenon is much more prominent, especially for the deep unfolding networks. 
The numerical results indicate that block-sparse recovery algorithms lead to better performance than the non-block counterparts when it comes to high noise power and a large number of blocks.
This is because our proposed network takes advantage of block structure and thus enjoys better recovery performances with a higher sparsity, which is also demonstrated in Theorem \ref{thm:convergence1}.
In our simulations, we validate the effectiveness of Ada-BlockLISTA which is not only superior to conventional ISTA and Block-ISTA in convergence speed but also exhibits stronger robustness to the block sparsity and measurement noise than \ac{adalista}.

\begin{figure} [tb]
	\centering
	\includegraphics[width=0.4\columnwidth]{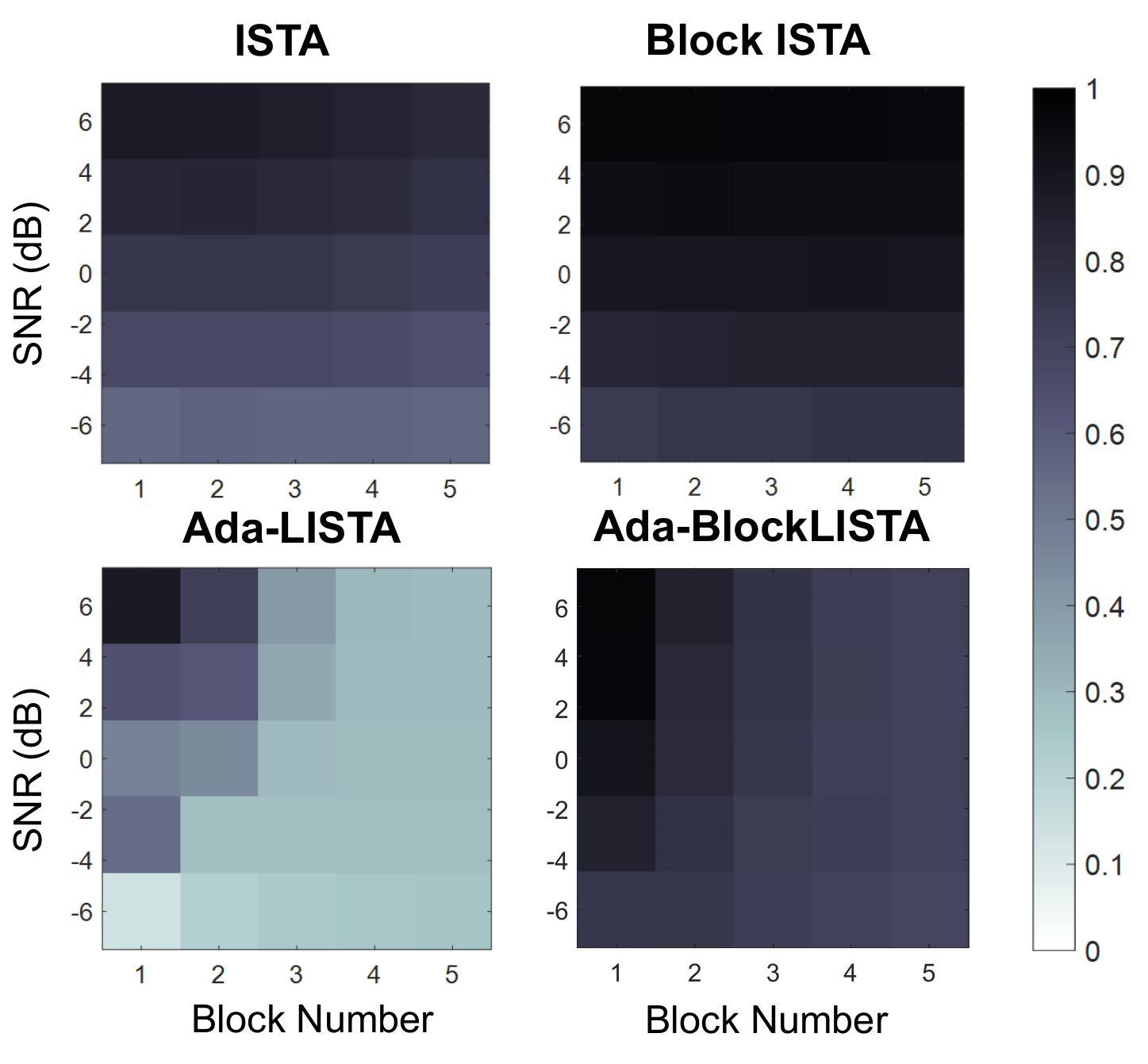}
	\caption{The hit rate of different algorithms in the noisy case.}
	\label{fig:hit rate}
\end{figure}

\section{Conclusion}\label{sec:conclusion}
In this work, we considered the block-sparse signal recovery problem, where nonzero entries of recovered signals occur in clusters.
We derived a block-sparse reconstruction network, named Ada-BlockLISTA, from \ac{adalista} by leveraging the particular block structure in the signal model. Furthermore, we prove that the unrolled block-sparse reconstruction network enjoys a linear convergence rate and provides a sufficient condition on block-sparsity to guarantee the successful recovery of block-sparse signals. 
Taking range-Doppler estimation of extended targets as an application, we analyzed its specific block structure and perform extensive simulations both in noiseless and noisy cases to verify the recovery performance of the proposed network.
The numerical results show that Ada-BlockLISTA yields better reconstruction properties compared to the original \ac{adalista} network, while increasing the number of blocks and noise power. 
By carrying out both theoretical and experimental analyses, we demonstrate that making explicit use of block-sparsity in unfolded deep learning networks enjoys improved block-sparse recovery performance. 
In the future, we will explore other additional structures in the signal model which helps build a structured unfolded network and find the optimal parameters.



\vspace{-0.4cm}
\bibliographystyle{elsarticle-num}
\bibliography{IEEEabrv}	

\newpage	
\appendix

\section{Convergence Proof} \label{sec:Proof}

In this appendix, we prove the linear convergence of Ada-BlockLISTA which has already been illustrated in Section  \ref{sec:Convergence}. 
In Appendix~\ref{ap:proof_lemma1}, we first prove a lemma that will be used in the proof of Theorem \ref{thm:convergence1}.
Then the detailed proof of Theorem \ref{thm:convergence1} is provided in Appendix \ref{ap:proof_Thm1}, consisting of two main steps.

\subsection{Proof of Lemma \ref{thm:lemma1}}
\label{ap:proof_lemma1}
We first prove the following lemma, which will be useful in bounding each Ada-BlockLISTA iteration:
\begin{lemma}\label{thm:lemma1}
	With  $\bm x,\bm x^*,\bm z \in {\mathbb{C}^M}$, we have
	\begin{equation}\label{eq:lemma}
	\norm{\bm x - \bm x^*}_2 \leq \theta + \norm{\bm z - \bm x^*}_2.
	\end{equation}
	where ${\bm x} = {\bm z}{\left( 1 - {\theta} / \left\| {\bm z} \right\|_2 \right)_{ + }}$.
\end{lemma}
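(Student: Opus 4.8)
The plan is to recognize that the map $\bm z \mapsto \bm x = \bm z(1 - \theta/\|\bm z\|_2)_+$ is precisely the block soft-thresholding nonlinearity that appears in the second step of Block-ISTA and of Ada-BlockLISTA (cf. \eqref{eq:AdaBlockLISTA-sub2}), and to exploit the fact that this operator displaces its argument by at most $\theta$ in $\ell_2$ norm. Concretely, the single key claim I would establish is
$$\|\bm x - \bm z\|_2 \le \theta,$$
after which the lemma follows immediately from the triangle inequality
$$\|\bm x - \bm x^*\|_2 \le \|\bm x - \bm z\|_2 + \|\bm z - \bm x^*\|_2 \le \theta + \|\bm z - \bm x^*\|_2.$$

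To prove the key claim I would split into two cases according to whether $\|\bm z\|_2$ exceeds the threshold $\theta$. In the case $\|\bm z\|_2 > \theta$ the positive part is active and $\bm x = \bm z - \theta\,\bm z/\|\bm z\|_2$, so that $\bm x - \bm z = -\theta\,\bm z/\|\bm z\|_2$ has norm exactly $\theta$. In the complementary case $\|\bm z\|_2 \le \theta$ the positive part vanishes and $\bm x = \bm 0$, whence $\|\bm x - \bm z\|_2 = \|\bm z\|_2 \le \theta$. In either case the displacement is bounded by $\theta$, which is all that is needed.

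The only place requiring a moment of care is the thresholded (zero-output) case, where $\bm x = \bm 0$ and the bound reduces to $\|\bm x^*\|_2 \le \theta + \|\bm z - \bm x^*\|_2$; this is again just the triangle inequality combined with $\|\bm z\|_2 \le \theta$. I do not anticipate any genuine obstacle, since the statement is an elementary robustness property of the block-shrinkage map, depending only on its definition and the triangle inequality in $\mathbb{C}^M$. The reason it is worth isolating as a lemma is that this per-iteration displacement bound is exactly the ingredient needed to control $\|\bm x^{(t+1)} - \bm x^*\|_{2,1}$ block-by-block in the proof of Theorem \ref{thm:convergence1}, where it will be applied to each sub-vector $\bm z_q^{(t+1)}$ with threshold $\theta^{(t)}$.
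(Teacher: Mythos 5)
Your proof is correct and takes essentially the same route as the paper's: the same case split on whether $\|\bm z\|_2$ exceeds the threshold $\theta$, with the triangle inequality doing the work in each case (in the active case the paper likewise computes $\theta = \|\bm x - \bm z\|_2$ and rearranges a reverse triangle inequality). Your only deviation is cosmetic but tidy—you isolate the unified displacement bound $\|\bm x - \bm z\|_2 \le \theta$ first and apply the triangle inequality once, where the paper argues each case separately.
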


\begin{proof}
	Consider the case where ${\theta} > \left\| {\bm z} \right\|_2$. 
	Then,
	$$\left( 1 - {\theta} / \left\| {\bm z} \right\|_2 \right) < 0, $$
	and the left side of the inequality is $\norm{ \bm x^*}_2$.
	Using the triangle inequality, one can obtain,
	\begin{align}
	\theta > \| \bm z \|_2 &= \| \bm{z} + \bm{x}^* - \bm{x}^*\|_2\\
	&\geq \| \bm{x}^* \|_2 - \| \bm{z} - \bm{x} \|_2.
	\end{align}
	Thus, $\norm{\bm x - \bm x^*}_2  = \left\| {\bm x^*} \right\|_2 \le {\theta} + \left\| {\bm z} - {\bm x^*} \right\|_2$.
	
	Next, consider the case where ${\theta} \le \left\| {\bm z} \right\|_2$. Then,
	$${\bm x} = {\bm z}{\left( 1 - {\theta} / \left\| {\bm z} \right\|_2 \right)}.$$ 
	Since $\| \bm z\| \geq \theta > 0,$ we can bound $\theta$ as,
	\begin{align*}
	\theta &= \frac{ \norm{\bm z \theta}_2} { \left\| {\bm z} \right\|_2 } \\
	&= \norm{ \bm x - \bm z }_2 \\
	&= \norm{ \left(\bm x - \bm x^*\right) - \left(\bm z - \bm x^*\right)}_2 \\
	& \geq \norm{\bm x - \bm x^*}_2 - \norm{\bm z - \bm x^*}_2 ,
	\end{align*}
	where the last inequality comes from the triangle inequality.
	Rearranging obtains the desired result.
\end{proof}

\subsection{Proof of Theorem \ref{thm:convergence1}}
\label{ap:proof_Thm1}

\begin{proof}
	
	Recall from the Theorem statement that, $\bm x^*\in \mathcal{X}(\zeta,s)$, and let $\bm \varepsilon \in \mathbb{C}^M$ be a complex standard random variable upper bounded by $\sigma$ with high probability.
	We define a subset $\mathcal{S}$ of cardinality $ \abs{\mathcal{S}} \le s$ containing all the indices of non-zero blocks in $\bm x^*$, i.e., if $q \in \mathcal{S}$, then $ \norm{\bm x^*_q}_2 > 0$. Otherwise $\bm x^*_q = \bm 0 $. 
	
	We separate the error bound in two parts:
	\begin{enumerate}
		\item 
		For zero blocks $\bm x^*_q,q \notin  \mathcal{S}$, we show that there are no false positive blocks in $ \bm x^{(t)}$ for all $t$, thus the recovery error of this part is $0$.
		\item
		For nonzero blocks $\bm x^*_q ,q \in  \mathcal{S}$, the norm of recovered error is bounded by mutual coherence.
	\end{enumerate}
	
	In the following proof, we use the notation $ \bm A_q^{(t)} = {\gamma}^{(t)}{\bm W}_q {\bf{\Phi }}_q$ for simplicity.
	
	\paragraph{Step 1} We prove the no-false-positive property by induction.
	The support hypothesis holds for $t = 0$, as we initial ${\bm x}^{(0)} $ as $\bm 0$. 
	From \eqref{eq:AdaBlockLISTA}, assuming $\bm x^{(t)}_q = \bm 0,q \notin \mathcal{S}$, we have
	\begin{align}
	{\bm z}_q^{(t+ 1)} 
	& = {\bm x}_q^{(t)} + \left({\bm A}_q^{(t)}\right)^H \left( \bm y - \sum\limits_{i = 1}^Q {{{\bm \Phi}_i}{\bm x}_i^{(t)}} \right) \nonumber \\
	& = \sum\limits_{i \in \mathcal{S}} \left({\bm A}_q^{(t)}\right)^H  {{\bm \Phi}_{i}} \left( {\bm x}_{i}^* - {\bm x}_{i}^{(t)} \right) + \left({\bm A}_q^{(t)}\right)^H  \bm \varepsilon.
	\end{align}
	
	Using the definition of block-coherence, the norm of this block $\left\| {\bm z}_q^{(t+ 1)} \right\|_2$ is bounded by $ P\tilde{\mu }_ B \sum\limits_{i \in \mathcal{S}}\left\| {\bm x}_{i}^* - {\bm x}_{i}^{(t)} \right\|_2 + C_{\bm W} \sigma$.
	
	As $ \theta ^{(t)} = P\tilde{\mu }_ B \sup\limits_{{\bm x}^*} \| {\bm x}^{(t)} - {\bm x}^* \|_{2,1}+ C_{\bm W} \sigma > \left\| {\bm z}_q^{(t+ 1)} \right\|_2$, we have
	\begin{eqnarray}\label{eq:nofalse}
	{\bm x}_q^{(t+ 1)} = {\bm z}_q^{(t+ 1)}{\left( 1 - {\theta ^{(t)}} / \left\| {\bm z}_q^{(t+ 1)} \right\|_2 \right)_{ + }} = \bm 0.
	\end{eqnarray}
	By induction, the no-false-positive property holds for all ${\bm x}_q^{(t)}, t \geq 0$.
	
	\paragraph{Step 2} We compute the error bound on the support set $\mathcal{S}$.
	
	For any nonzero block $\bm x^*_q ,q \in  \mathcal{S}$, the computation of ${\bm z}_q^{(t+ 1)}$ can be divided in three parts $(T1)$, $(T2)$ and $(T3)$, by separating index $q$ from $\mathcal{S}$
	\if false
	\begin{align}
	{\bm z}_q^{(t+ 1)} 
	& = \underbrace{ {\bm x}_q^{(t)} +   {\bf{\Phi }}_q^H \left({\bm W}_q^{(t)}\right)^H {{\bm \Phi}_q} \left( {\bm x}_q^* - {\bm x}_q^{(t)} \right) }_{(T1)} \nonumber \\
	&+ \underbrace{  \sum\limits_{q' = 1, q' \ne q}^Q {\bf{\Phi }}_q^H \left({\bm W}_q^{(t)}\right)^H  {{\bm \Phi}_{q'}} \left( {\bm x}_{q'}^* - {\bm x}_{q'}^{(t)} \right) }_{(T2)}  \notag \\
	& + \underbrace{  {\bf{\Phi }}_q^H \left({\bm W}_q^{(t)}\right)^H  \bm \varepsilon }_{(T3)},
	\end{align}
	\fi
	
	\begin{align}
	{\bm z}_q^{(t+ 1)} 
	& = \underbrace{ {\bm x}_q^{(t)} + \left({\bm A}_q^{(t)}\right)^H {{\bm \Phi}_q} \left( {\bm x}_q^* - {\bm x}_q^{(t)} \right) }_{(T1)} \nonumber \\
	&+ \underbrace{  \sum\limits_{q' = 1, q' \ne q}^Q \left({\bm A}_q^{(t)}\right)^H  {{\bm \Phi}_{q'}} \left( {\bm x}_{q'}^* - {\bm x}_{q'}^{(t)} \right) }_{(T2)}  \notag \\
	& + \underbrace{ \left({\bm A}_q^{(t)}\right)^H  \bm \varepsilon }_{(T3)},
	\end{align}
	We first consider part $(T1)$. 
	Since the diagonal elements of $\left({\bm A}_q^{(t)}\right)^H {{\bm \Phi}_q}$ are equal to $1$,
	we have
	\begin{eqnarray}
	(T1) = {\bm x}_q^* + \left(({\bm A}_q^{(t)})^H {{\bm \Phi}_q} - \bm I_P \right)( {\bm x}_q^* - {\bm x}_q^{(t)} ).
	\end{eqnarray}
	Then we move ${\bm x}_q^*$ to the left side and take the norm on both sides.
	
	According to Lemma \ref{thm:lemma1}, for all $q \in \mathcal{S}$, the norm of error can be computed as 
	\begin{align}
	\norm{ {\bm x}_q^{(t+ 1)} - \bm x_q^*}_2 
	& \leq \theta^{(t)} + \norm{\bm z_q^{(t+ 1)} - \bm x_q^*}_2 \nonumber \\
	& \leq \theta^{(t)} +  \left\| \left(({\bm A}_q^{(t)})^H {{\bm \Phi}_q} - \bm I_P \right)( {\bm x}_q^* - {\bm x}_q^{(t)} ) \right\|_2  \nonumber \\
	& + \sum\limits_{q' = 1, q' \ne q}^Q \left\| ({\bm A}_q^{(t)})^H  {{\bm \Phi}_{q'}} ( {\bm x}_{q'}^* - {\bm x}_{q'}^{(t)} ) \right\|_2  \notag \\
	& + C_{\bm W} \sigma.
	\end{align}
	
	Since $
	{\left\| ({\bm A}_q^{(t)})^H {\bm \Phi}_q - {{\bf{I}}_P} \right\|_s} \le \left( {P - 1} \right) \tilde{\nu }_ I
	$
	and
	$
	{\left\| ({\bm A}_q^{(t)})^H  {{\bm \Phi}_{q'}} \right\|_s} \le P\tilde{\mu }_ B
	$, we have
	\begin{align}\label{eq:norm_2}
	\norm{ {\bm x}_q^{(t+ 1)} - \bm x_q^*}_2 
	& \leq \theta^{(t)} +  \left( {P - 1} \right) \tilde{\nu }_ I \left\| {\bm x}_q^* - {\bm x}_q^{(t)} \right\|_2  \nonumber \\
	& + P\tilde{\mu }_ B \sum\limits_{q' = 1, q' \ne q}^Q \left\|  {\bm x}_{q'}^* - {\bm x}_{q'}^{(t)} \right\|_2  \notag \\
	& + C_{\bm W} \sigma.
	\end{align}
	
	As $\Supp({\bm x}^{(t)}) \subseteq \Supp (\bm x^*)$, we have
	\begin{align}\label{eq:norm_21}
	\norm{ {\bm x}^{(t+ 1)} - \bm x^*}_{2,1}
	& = \sum\limits_{q \in \mathcal{S}} \norm{ {\bm x}_q^{(t+ 1)} - \bm x_q^*}_{2} \nonumber \\	
	& \leq s\theta^{(t)} +  \left( {P - 1} \right) \tilde{\nu }_ I \sum\limits_{q \in \mathcal{S}} \left\| {\bm x}_q^* - {\bm x}_q^{(t)} \right\|_2  \nonumber \\
	& + P\tilde{\mu }_ B \sum\limits_{q \in \mathcal{S}} \sum\limits_{q' = 1, q' \ne q}^Q \left\|  {\bm x}_{q'}^* - {\bm x}_{q'}^{(t)} \right\|_2  \notag + sC_{\bm W} \sigma \nonumber \\
	& \leq s\theta^{(t)} +  \left( P - 1 \right) \tilde{\nu }_ I \left\| {\bm x}^* - {\bm x}^{(t)} \right\|_{2,1}  \nonumber \\
	& + P\tilde{\mu }_ B \left( s - 1 \right) \left\|  {\bm x}^* - {\bm x}^{(t)} \right\|_{2,1} + sC_{\bm W} \sigma.
	\end{align}
	
	Finally, by taking supremum over all $\bm x^* \in \mathcal{X}(\zeta,s)$ on both sides of \eqref{eq:norm_21}, we get
	
	\begin{align}\label{eq:norm_21_sup}
	\!\!\sup\limits_{{\bm x}^*} \!\! \norm{ {\bm x}^{(t+ 1)}\!\!  - \!\! \bm x^*}_{2,1}
	& \leq \left(  \left( P \!\! - \!\! 1 \right) \tilde{\nu }_ I \!\! +\!\!  P\tilde{\mu }_ B \left( s\!\!  -\!\!  1 \right)  \right) \!\! \sup\limits_{{\bm x}^*} \!\! \norm{ {\bm x}^* \!\! - \!\! {\bm x}^{(t)} }_{2,1} \nonumber \\
	& + s\theta^{(t)} + sC_{\bm W} \sigma.
	\end{align}
	
	As $ \theta ^{(t)} = P\tilde{\mu }_ B \sup\limits_{{\bm x}^*} \| {\bm x}^{(t)} - {\bm x}^* \|_{2,1}+ C_{\bm W} \sigma$, it is easy to find the recursive form for consecutive errors of $\sup\limits_{{\bm x}^*}  \norm{ {\bm x}^{(t+ 1)} - \bm x^*}_{2,1}$ and $\sup\limits_{{\bm x}^*}  \norm{ {\bm x}^{(t)} - \bm x^*}_{2,1}$ as below.
	\begin{align}\label{eq:recursive}
	& \sup\limits_{{\bm x}^*}  \norm{ {\bm x}^{(t+ 1)} - \bm x^*}_{2,1} \nonumber \\
	& \leq \left(  ( P - 1 ) \tilde{\nu }_ I + P\tilde{\mu }_ B ( 2s - 1 )  \right) \sup\limits_{{\bm x}^*} \left\| {\bm x}^{(t)} - \bm x^* \right\|_{2,1} + 2sC_{\bm W} \sigma 
	\nonumber \\
	& \leq \left(  ( P - 1 ) \tilde{\nu }_ I + P\tilde{\mu }_ B ( 2s - 1 )  \right)^{t+1} \sup\limits_{{\bm x}^*} \left\| {\bm x}^{(0)} - \bm x^* \right\|_{2,1} \nonumber \\	
	& + 2sC_{\bm W}\sigma \left(\sum\limits_{i = 0}^t 
	\left(  ( P - 1 ) \tilde{\nu }_ I + P\tilde{\mu }_ B ( 2s - 1 )  \right)^{i} 
	\right).
	\end{align}
	
	
	For all $\bm x^* \in \mathcal{X}(\zeta,s)$, $\left\| \bm x^* \right\|_{2,1} \le s\zeta $.
	Define two constants,
	\begin{align}
	c_1=-\log \left((  P \!\!-\!\! 1) \tilde{\nu }_ I \!\!+ \!\! P\tilde{\mu }_ B ( 2s \!\!-\!\! 1) \right) \\
	c_2=\frac{ 2s C_{\bm W}}{ 1 - ( P - 1 ) \tilde{\nu }_ I - P\tilde{\mu }_ B ( 2s - 1 )}.
	\end{align}
	
	If $s$ satisfies \eqref{eq:s}, then $( P - 1 ) \tilde{\nu }_ I + P\tilde{\mu }_ B ( 2s - 1 ) < 1$, 
	thus the network with parameters satisfying \eqref{eq:g_sub_coherence}, \eqref{eq:g_block_coherence} and \eqref{eq:C_W} can converge linearly and the recovered error bound for $k$-layer network holds 
	\begin{align}\label{eq:final}
	\sup\limits_{{\bm x}^*}  \norm{ {\bm x}^{(t+ 1)} - \bm x^*}_{2,1} \leq s \zeta \exp(-c_1(t+1)) +  c_2 \sigma .
	\end{align}
	which completes the proof of Theorem \ref{thm:convergence1}.
\end{proof}

\end{document}